\newtheorem{theorem}{Theorem}[section]
\newtheorem{definition}[theorem]{Definition}
\newcommand{\rmd}{{\rm d}}
\newcommand{\bs}{\boldsymbol}
\newcommand{\bm}{\boldsymbol}
\newcommand{\Cor}{\mathrm{Cor}}
\newcommand{\rmi}{\mathrm{i}}
\newcommand{\mbf}{\boldsymbol}
\newtheorem{proposition}{Proposition}[section]
\theoremstyle{plain}
\theoremstyle{remark}
\theoremstyle{empty}
\newtheorem*{duplicate}{Proposition}
\begin{document}
\doublespacing
\begin{center}
    \large
    \textbf{KenCoh: A Ranked-Based Canonical Coherence} \footnote{
    Preprint}
        
   \vspace{0.2cm}
    Mara Sherlin D.P. Talento$^{\text{a}*}$,
    Sarbojit Roy$^{\text{b}*}$, 
    Tania Reyes Vallejo$^{\text{c}**}$,
    Leena A Ibrahim$^{\text{d}**}$,
    Hernando Ombao$^{\text{e}*}$
    
    \vspace{0.01cm}
   
    {
    \singlespacing
    \textit{$^a$marasherlin.talento@kaust.edu.sa, $^b$sarbojit.roy@kaust.edu.sa, $^c$tania.reyesvallejo@kaust.edu.sa, $^d$leena.ibrahim@kaust.edu.sa, $^e$hernando.ombao@kaust.edu.sa}

    $^*$Statistics Program, Computer Electrical and Mathematical Science \& Engineering, King Abdullah University of Science and Technology

    $^{**}$  Bioscience Program, Biological and Environmental Science and Engineering Division, King Abdullah University of Science and Technology
    }

    \vspace{0.9cm}
    
    \textbf{Abstract}
\end{center}
This work is inspired by the problem of characterizing a dependence measure between two cortical regions of the brain where each region contains multiple signal recordings from several neurons or channels (e.g., inhibitory and excitatory neurons). The goal is to identify differences in the structure of brain functional connectivity between known brain states. An exploratory tool for studying the dependence between two random vectors is via canonical correlation analysis. However, these are limited to only capturing linear associations and are sensitive to outlier observations. Mitigating these limitations is crucial because brain functional connectivity is likely to be more complex than linear, and brain signals may exhibit heavy-tailed properties. To overcome these limitations, we develop a robust method, Kendall's tau-based canonical coherence (KenCoh), to learn connectivity structure among neuronal signals filtered at given frequency bands. Our simulation study demonstrates that KenCoh is competitive with the moment-based estimator and outperforms the latter when the underlying distributions are heavy-tailed. We apply our method to EEG recordings from a virtual-reality driving experiment and to calcium imaging recordings in inhibitory and excitatory neurons of the auditory cortex in mice subjected to sound stimuli. Our findings reveal distinct regional dependencies across frequency bands and brain states. 

\textbf{Keywords:} dependence; electroencephalography; robust; multivariate; spectral analysis.
\vfill

\newpage

\section{Introduction}\label{chap:introduction}

Cognitive task-related brain responses are typically linked to the activity of specific brain regions or neuronal populations, as well as specific brain wave patterns, rather than interactions between individual neuron pairs. For instance, 
research has shown that the occipital and parietal regions---responsible for visual processing and spatial awareness---play a key role in collision avoidance while driving \citep{Spiers2007Driving}. 
\cite{Shi2023Frontal} observed heightened activation in the frontal and prefrontal cortical regions in response to visual and auditory cognitive distractions. These findings suggest that certain cortical areas, e.g., the Temporal-Parietal-Occipital (TPO) and Frontal-Prefrontal (Fp) regions, are critically involved in attention and cognition during task-related experiments.

Despite this growing understanding, commonly used metrics for wave-specific associations in time series analysis primarily focus on pairwise relationships between signals, rather than connectivity between groups of time series \citep{Ombao2008evolutionary}. Hence, in this paper, we develop a statistical method that explores variation in connectivity between two brain regions or neuronal populations for different brain conditions (e.g., drowsy and alert). 
We now introduce here the electroencephalography (EEG) dataset from \cite{Cao2019DrivingData} which is publicly available at \url{https://figshare.com/articles/dataset/EEG_driver_drowsiness_dataset/14273687?file=30707285}. It contains multi-channel EEG recordings collected during a virtual-reality driving experiment. 

EEG, a non-invasive technique, involves placing multiple electrodes (or channels) on the scalp at predefined locations to capture brain signals.
\begin{figure}
    \centering
    \includegraphics[width=1.05\textwidth]{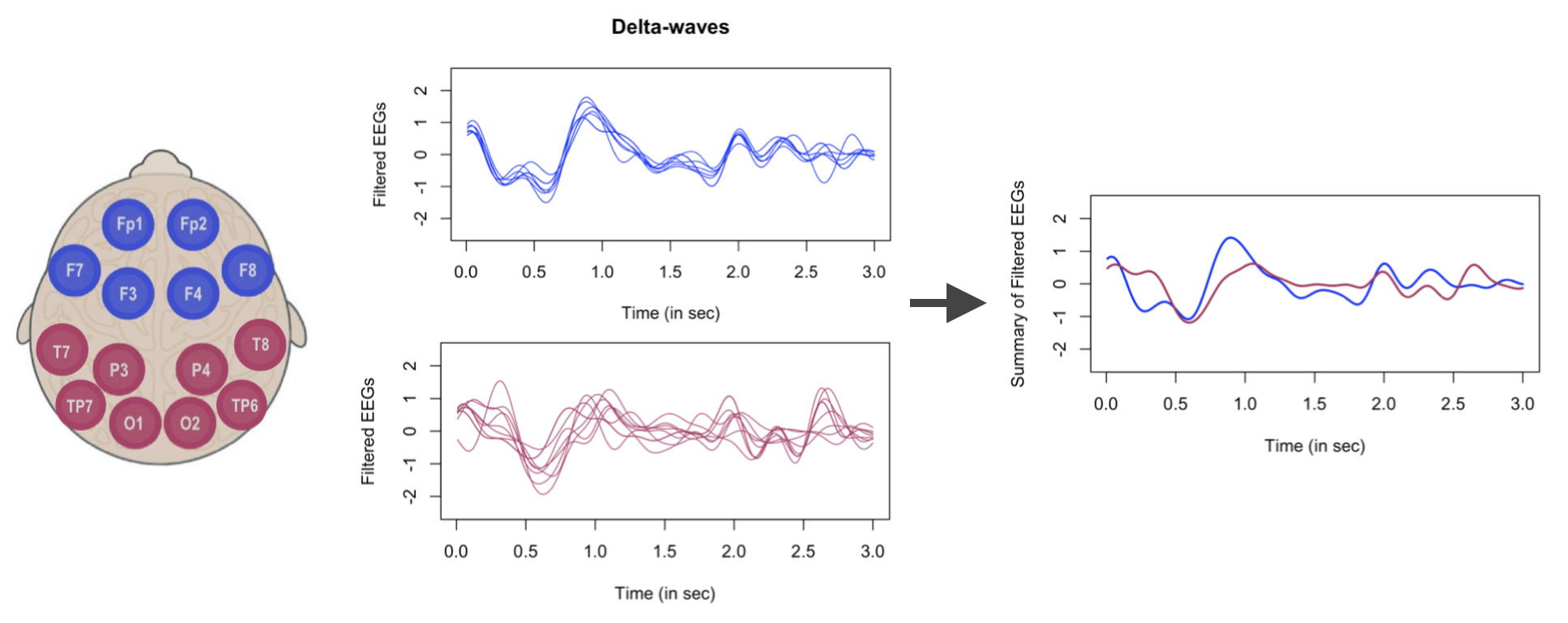}
    \caption{EEG from the driving experiment. (Left) Blue colored nodes are channels on top of the Fp region and red colored nodes are channels on top of the TPO region. (Mid-top) Multiple signals of Fp lobe EEG channels and (Mid-bottom) that of TPO lobe EEG channels filtered at the Delta band (0,4] Hz. (Right) ``Summary'' of filtered signals representing EEG channels of Fp lobe and TPO lobe. }
    \label{framework}
\end{figure} 
Figure~\ref{framework} shows 14 EEG channels located on the scalp that are named after specific brain regions, i.e., blue colored nodes are channels on the anterior regions or the Fp region, namely, Fp1, Fp2, F3, F4, F7, and F8; and red colored nodes are channels on the posterior regions or the TPO region, namely, P3, P4, T7, T8, TP7, TP6, O1, and O2. The mid-panel of Figure~\ref{framework} plots the ``oscillatory'' signals (formally defined in Section~\ref{sec:SpectralAssoc}) extracted from the EEG channels of Fp and TPO regions. 
This plot demonstrates a high degree of synchrony among signals recorded from the same brain region, as reflected in the summarized (representative) signal shown in the rightmost panel of Figure~\ref{framework}.

One disadvantage of analyzing channel pairs one at a time (rather than as a group) is that it
overlooks the broader concurrent dependencies among signals within the same group, which is naturally-occurring in many applications \citep{Krafty2016discriminant}. 
Hence, our goal is to characterize the cross-regional interactions (or dependence) between two regions, while accounting for within-region synchrony. 
This better aligns with the spatial resolution inherent in brain signals, e.g., EEGs. 
There is relatively limited literature addressing correlation between two groups of time series. Most existing methods are extensions of the classical canonical correlation analysis (CCA) originally developed by \cite{HHotelling1936} \citep[also see][]{Hotelling1992CCA}. \cite{He2004CanCorFD} extended CCA to function-valued random vectors, and this framework was further generalized by \cite{Eubank2008CanCorSP}, allowing one or both inputs to be either finite-dimensional vectors or functional data. However, these approaches operate solely in the time domain, whereas many biomedical time series require frequency-domain analysis \citep{Bruce2020empirical}. \cite{Krafty2013CanCorTS} 
examine canonical correlations of a vector-valued variable and an infinite-dimensional representation of time series derived from its spectral density. 
In contrast, our work focuses on developing canonical correlation methods that are explicitly frequency-based.
A more closely related line of work is formulation of canonical correlation for stationary time series in \cite{brillinger1969canonical}. The method proposed in \cite{brillinger1969canonical} maximizes inter-series association at each frequency to construct canonical variates. Building on this, our paper introduces a novel method that estimates monotonic canonical association over predefined frequency bands (defined in Equation~\eqref{FreqBand}). Our proposed Kendall's tau-based canonical coherence (KenCoh) is a rank--based measure for dependence between two groups of signals with the advantage of not reliant on moment existence. Our method is robust to the presence of heavy-tailed or outlier-contaminated signals, addressing a critical limitation of earlier approaches.  

The main goal of this paper is to propose a novel statistical method for estimating overall coherence between multivariate time series (MTS) within specified frequency bands. 
The primary contribution of KenCoh is the development of a robust framework for assessing functional connectivity, along with a formal statistical inference procedure to compare connectivity across cognitive states. The key advantages of KenCoh are (i) a comprehensive measure of connectivity within frequency bands $\Omega$ associated with brain function; (ii) robustness to outliers in time series data; and (iii) the ability to reveal differences in brain networks, such as those observed between alert and drowsy states in a virtual driving task.

The remainder of the article is organized as follows. In Section~\ref{sec:SpectralAssoc}, we discuss the measures of dependence between time series in spectral domain and highlight the limitations of existing approaches. Section~\ref{chap:methodology} presents \textbf{KenCoh}, our proposed method for estimating canonical band-coherence between groups of time series. In Section~\ref{chap:simulation}, we evaluate the robustness of KenCoh through simulation under various settings. In Section 5, we analyze the EEG data and the calcium recording data using the KenCoh method. The KenCoh analysis reveals
novel insights into cognitive brain dynamics.
Finally, Section \ref{chap:conclusion} summarizes the conclusion of the study and limitations of the proposed method.

\section{Spectral dependence} \label{sec:SpectralAssoc}

We discuss here the different spectral dependence measures available in the literature. In Section~\ref{subsec:coh}, we describe coherence which is a pairwise measure of spectral dependence. Section~\ref{subsec:limitation} presents and compares different approaches for defining spectral dependence between two groups of weakly-stationary time series and their limitations. 

\subsection{Coherence between a pair of variables} \label{subsec:coh}
To characterize dependence between group of signals, our focus will be on the synchronization of oscillations from different brain regions. 
Let ${\mbf X}(t) = (X_1(t), \ldots, X_P(t))^\top$ be a collection of EEG signals from one group of channels, which we assume to be weakly stationary over an epoch or time block indexed by time $t = 1,\dots,T$. 
This weakly stationary multi-channel signal can be represented as a superposition of the Fourier waveforms with random amplitudes via the Cram\'er representation \citep{Priestley1967power}. Let 
$\{ e^{\rmi 2 \pi \omega t}, \omega \in (-\frac{1}{2}, \frac{1}{2})\}$ be the Fourier basis functions, where 
$\rmi = \sqrt{-1}$. Then, the Cram\'er representation of $\bs{X}(t)$ is 
\[{\mbf X}(t) = \int_{-0.5}^{0.5} e^{\rmi 2 \pi \omega t} \rmd{\mbf A}(\omega),\]
where the increment process $\rmd{\mbf A}(\omega) \in \mathbb{C}^{P}$ satisfies $\mathbb{E}[\rmd{\mbf A}(\omega)] = \mbf{0}$, $\text{Cov}(\rmd{\mbf A}(\omega), \rmd{\mbf A}(\lambda)) = {\mbf 0}$ when $\lambda \ne \omega$ and $\text{Cov}(\rmd{\mbf A}(\omega), \rmd{\mbf A}(\omega)) = {\mbf f}(\omega) \rmd\omega$,
where ${\mbf f}(\omega)$ is the $P \times P$ spectral matrix which is Hermitian positive semi-definite. The diagonal elements, $f_{jj}(\omega)$, $j = 1, \dots, P$, are the auto-spectra, and the off-diagonal elements, $f_{jk}(\omega)$, $j \neq k = 1, \dots, P$, are the cross-spectra. Since ${\mbf f}(\omega)$ is Hermitian, then $f_{jk}(\omega) = f^*_{kj}(\omega)$, where $f^*_{kj}(\omega)$ is the complex-conjugate of $f_{kj}(\omega)$, for $j \neq k = 1, \dots, P$. 

We now define coherence which is a well-known measure of pairwise synchrony \citep[]{shumway2000time, brillinger2001time}. 
Coherence
between channels $j$ and $k$ at frequency $\omega$ is 
\begin{eqnarray}
    \rho_{jk}(\omega) & = & \|\text{Cor}(\rmd A_j(\omega), \rmd A_k(\omega) \|^2 \notag \\
    & = & \frac{|f_{jk}(\omega)|^2}{f_{jj}(\omega)f_{kk}(\omega)} \in [0,1]. \label{coherence}
\end{eqnarray}
A coherence value close to 1 indicates high synchrony of signals at $\omega$-frequency.
From the above relations, we note that (a.) the diagonal elements of the spectral matrix, ${\mbf f}(\omega)$, captures the relative contribution of all oscillations to the total variability for each channel and (b.) coherence is a frequency-domain analogue of squared cross-correlation.
To provide an intuitive interpretation of coherence,
let us define the oscillatory component at frequency-$\omega$ and channel-$j$ to be ${X_{j,\omega}(t) = e^{\rmi 2 \pi \omega t} \rmd A_j(\omega)}$, for $j = 1, \dots, P$.
Since $|e^{\rmi 2 \pi \omega t}|^2 = 1$, it follows that the squared cross-correlation between these oscillations is ${\|\text{Cor}( X_{j,\omega}(t), X_{k,\omega}(t) ) \|^2 = \|\text{Cor}(\rmd A_j(\omega), \rmd A_k(\omega) ) \|^2}$ which is the coherence in Equation~\eqref{coherence}.
This result implies that coherence is a cross-dependence measure between the oscillatory activity of two channels \citep[see][]{Ombao2008evolutionary, ombao2022spectral}. 

In typical EEG analysis, coherence between signals is computed 
over a specific frequency \textit{band} rather than at a singleton frequency \citep{Bruce2020empirical, ombao2024spectral}. The standard frequency bands in EEG analysis are the Delta $(0,4]$ Hz, Theta $(4,8]$ Hz, Alpha $(8,12]$ Hz, Beta $(12,30]$ Hz and Gamma $(30,50]$ Hz \citep{Srinivasan2007EEG}. 
We now define a frequency band for sampling rate $S$ (i.e., $S$ observations per second), as
\begin{equation}
    \Omega = \{S\omega : \omega \in (\omega_1, \omega_2)\}    \label{FreqBand}
\end{equation}
where $0<\omega_1<\omega_2<1/2$.
In our analyses, coherence between a pair of channels is estimated by applying a bandpass filter.
We filter the series within this band through an absolutely summable $c_{\Omega}(s)$, $s \in \{ 0, \pm1, \pm 2, \ldots \}$.
Its Fourier transform, denoted by $C^{(\Omega)}(\omega) = \sum_{s = -\infty}^{\infty} c_{\Omega}(s) e^{-\rmi 2\pi\omega s}$, satisfies
\begin{equation}
    \left|C^{(\Omega)}(\omega)\right|^2 =\begin{cases}
        1/2\delta \; \; \; \; \text{ for } \omega \in \Omega/S, \\
        0 \; \;\;\; \; \; \; \text{ for } \omega \notin \Omega/S, 
    \end{cases} \label{linearfilter}
\end{equation}
where $\delta = \omega_2 - \omega_1 \in (0, 0.5)$. 
We define the filtered signal at channel $X_j(t)$, denoted by $X_{j, \Omega}(t)$, as ${X_{j, \Omega}(t)  = \sum_{s = -\infty}^\infty c_{\Omega}(s) X_j(t-s)}$, for $j = 1, \dots, P$. This filtered series $X_{j, \Omega}(t)$ has zero power-spectrum outside the frequency band $\Omega$ \citep[see][for details]{Ombao2008evolutionary}. 
The coherence between channels $j$ and $k$ at frequency band $\Omega$ is 
\begin{gather}
    \rho_{jk}(\Omega) = \max_{\ell_0} \frac{|f_{jk}(\Omega, \ell_0)|^2}{f_{jj}(\Omega)f_{kk}(\Omega)}, \; \text{ where } \notag \\
    f_{jk}(\Omega, \ell_0) = \int_{\Omega} f_{jk}(\omega) e^{\rmi 2\pi\omega \ell_0} \rmd\omega \; \; \; \text{and} \; \; \; f_{jj}(\Omega) = \int_{\Omega} f_{jj}(\omega) \rmd\omega,  \label{f-Omega}
\end{gather} 
and $\ell_0 = 0, \pm1, \dots, \pm L$ characterizes lead-lag dependence between $X_{j, \Omega}(t)$ and $X_{k, \Omega}(t)$; i.e., the $\ell_0$ is the time-delay and $2\pi\omega \ell_0$
is the phase-shift in the relationship of $j$-th and $k$-th channels at frequency $\omega$ \citep[for discussion of phase-shift, see Chapter 7 of][]{shumway2000time, ombao2022spectral}. 
The above relation implies that the observed filtered signals can be used to estimate the coherence at each frequency band. 

\subsection{Between-group analysis within a frequency band} \label{subsec:limitation}

We now return to our main interest which is to characterize spectral dependence between two brain regions, or between two groups of channels, denoted as ${{\mbf X}(t) = (X_1(t), \ldots, X_P(t))^\top}$, and ${\mbf Y}(t) = (Y_1(t), \ldots, Y_Q(t))^\top$, for $t = \{1, \dots, T\}$ (as depicted in Figure~\ref{framework}). 
A natural way is to aggregate the signals that belong in the same group through weighted means. Suppose we have $\mbf{u} \in \mathbb{R}^P$ and $\mbf{v} \in \mathbb{R}^Q$. We obtain a summary for $\mbf{X}(t)$ and $\mbf{Y}(t)$, denoted as $\widetilde{X}(t) = \mbf{u}^\top\mbf{X}(t) \in \mathbb{R}$ and $\widetilde{Y}(t) = \mbf{v}^\top\mbf{Y}(t) \in \mathbb{R}$, respectively. 
A naive approach is to assign ``equal weights" to the signals within the same group, that is, $\mbf{u} = (1/P , \ \dots, \ 1/P)$ and $\mbf{v} = (1/Q , \ \dots, \ 1/Q)$. 
This approach, i.e., assigning a pre-determined equal weights to each channel in the group, can be problematic when there are erratic channels or channels which do not contribute to the dependence between the two groups. 

\begin{figure}
    \centering
    \includegraphics[width=0.9\textwidth]{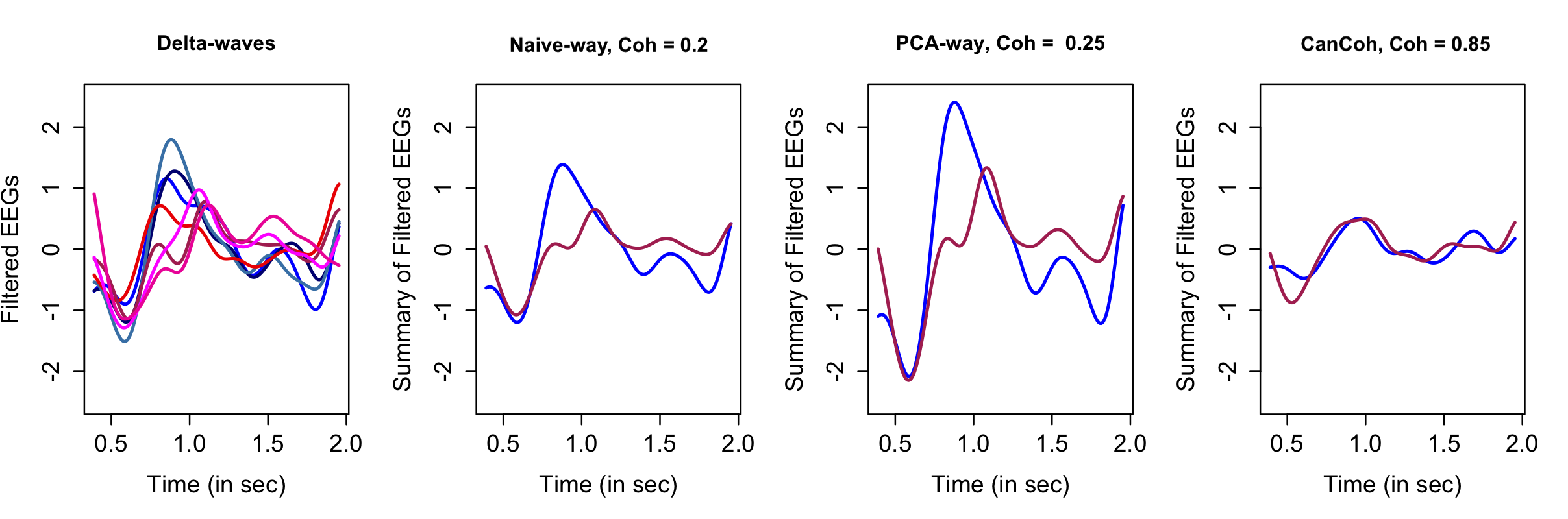}
    \caption{Delta-band filtered EEG signals from the driving dataset in \cite{Cao2019DrivingData} (\textit{showing the first half of the series}). The leftmost panel contains signals from the left Fp lobe (blue) and the left TPO lobe (red). The remaining three panels present regional averages of these signals computed using equal weights (mid-left), PCA-based weights (mid-right), and canonical correlation weights (rightmost).}
    \label{Agg2}
\end{figure} 

An alternative is a data-driven method for determining the weights.
Principal component analysis (PCA) offers a data-driven weight assignment that capture the maximum variation in the data \citep[see][for discussion]{Anderson1958Multivar}, that is, we let $\{\widetilde{X}(t)\}_{t = 1}^T$ and $\{\widetilde{Y}(t)\}_{t = 1}^T$ to be the first principal component scores. 
To illustrate this, consider the EEG signals from the left Fp region (Fp1, F3, F7) and the left TPO region (P3, TP7, T7, O1) using a subset of the data from \cite{Cao2019DrivingData}. Figure~\ref{Agg2} presents regional aggregates computed using the two methods described earlier. The figure highlights clear differences among the aggregation approaches, particularly in the representation of the left TPO signals (red traces) around their peak activity. Both the naive and PCA methods appear to be strongly affected by phase shifts, leading to multiple smaller peaks. 


\paragraph{Limitations of the existing methods. } To evaluate the effectiveness of these approaches in capturing multivariate associations, we present a simple simulation study.
Let $\{\mbf{Y}(t)\}_{t = 1}^T$ be a function of $\{\mbf{X}(t)\}_{t = 1}^T$ and white noise, for $T = 128\text{Hz} \times 3 \text{sec}$. Specifically, let $W_k(t) \sim N(0, 0.0001)$ be the white noise component with small variance, for $k = {1, \dots, 8}$ and $t = 1, \dots, 384$. Moreover, we have ${X_1(t) = 0.85 X_1(t - 1) - 0.73X_1(t-2) + W_1(t)}$, and $X_j(t) = W_j(t)$ for $j = \{2, 3, 4\}$, i.e., $X_1(t)$ is a second-order autoregressive process. 
Finally, we define $Y_j(t) = (j/10) X_j(t), \; \text{for } t = \{ 1, \dots, 384\} \; \text{and }  j = \{1, \dots, 4\}$.
In this simulation, we expect $\mbf{Y}(t)$ to be highly associated with $\mbf{X}(t)$. In fact, we can find $\widetilde{X}(t)$ and $\widetilde{Y}(t)$, 
such that Cor$(\widetilde{X}(t),\widetilde{Y}(t))$ is close to one. 
\begin{figure}
    \centering
    \includegraphics[width=0.9\textwidth]{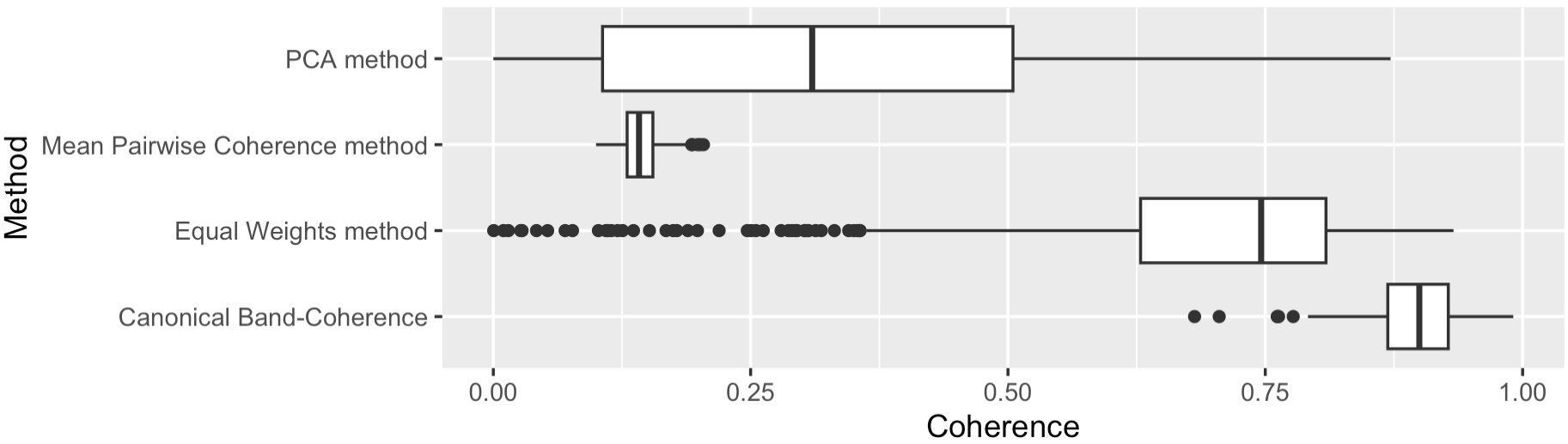}
    \caption{Coherence between $\widetilde{X}$ and $\widetilde{Y}$ at Theta-band of simulated MTS using the four different methods, namely, canonical coherence method, equal weights method, mean of pairwise coherences method and PCA method.}
    \label{OverallCoh}
\end{figure} 

We replicate the experiment 500 times and apply the different approaches described above to estimate the overall coherence between $\widetilde{X}(t)$ and $\widetilde{Y}(t)$. 
The first approach is using the ``equal weights'', the second approach is the ``PCA method'', and the third approach is the novel method for obtaining the ``canonical band-coherence'' that we define in Definition~\ref{definition:CBC}. 
In addition to these methods of aggregation, we also take the mean of all pairwise coherence between ${\mbf X}(t)$ and ${\mbf Y}(t)$, for comparison.

From Figure~\ref{OverallCoh}, we observe that averaging of all pairwise cross-coherence values between $\mbf{X}(t)$ and $\mbf{Y}(t)$ (see second box-plot from the top) results in an underestimated measure of overall coherence. This approach yields a high coherence value only when all components of $\mbf{X}(t)$ and $\mbf{Y}(t)$ are perfectly synchronized. Consequently, even if a subset of components exhibits strong synchronization (possibly with some phase-shift), then the overall measure remains low, thereby failing to provide an effective summary of connectivity between groups of time series. On the other hand, aggregating the time series first and then computing coherence (see third box-plot from the top) yields relatively higher coherence values.
However, this approach suffers from a key limitation---it imposes fixed weights during aggregation (i.e., assigned prior to coherence estimation), offering little interpretability regarding the individual contributions of each component to the coherence. Moreover, although the PCA method offers a data-driven way of assigning weights, it gives a low measure of overall association. This is because PCA does not account for the cross-covariance in its procedure. To mitigate these limitations, we introduce in Section~\ref{subsec2:CBC} a novel, interpretable measure of canonical band-coherence, grounded in the concept of coherence between two MTS at a single frequency \citep{brillinger1969canonical, Lyubushin1998analysis}.

\paragraph{Canonical coherence for weakly stationary time-series.} Define a ${(D = P+Q)}$-dimensional zero-mean weakly stationary time series as ${\mbf{Z}(t) = (X_1(t), \dots, X_P(t), Y_1(t), \dots, Y_Q(t))}$.
The corresponding spectral density matrix of $\mbf{Z}(t)$ is
\[ {\mbf f}_{ZZ}(\omega) = \begin{pmatrix}
    {\mbf f}_{XX}(\omega) & {\mbf f}_{XY}(\omega) \\
    {\mbf f}_{YX}(\omega) & {\mbf f}_{YY}(\omega)
\end{pmatrix},\]
where ${\mbf f}_{XX}(\omega)$ is the $P \times P$ auto-spectral matrix of ${\mbf X}(t)$, ${\mbf f}_{YY}(\omega)$ is the $Q \times Q$ auto-spectral matrix of ${\mbf Y}(t)$, and ${\mbf f}_{XY}(\omega)$ is the $P \times Q$ cross-spectral matrix between ${\mbf X}(t)$ and ${\mbf Y}(t)$.
Given vectors $\mbf{\alpha}, \mbf{\alpha}^{*} \in \mathbb{C}^P$ and $\mbf{\beta}, \mbf{\beta}^{*} \in \mathbb{C}^Q$, where $\mbf{\alpha}^{*}$ and $\mbf{\beta}^{*}$ denote the complex-conjugates of $\mbf{\alpha}$ and $\mbf{\beta}$, respectively, the canonical coherence at frequency $\omega$ in \cite{brillinger1969canonical} is 
 \begin{equation*}
    \theta(\omega) = \max_{\{\mbf{\alpha}, \mbf{\beta}\}} \left|\frac{\mbf{\alpha}^{*\top} {\mbf f}_{XY}(\omega) \mbf{\beta}}{\sqrt{\mbf{\alpha}^{*\top} {\mbf f}_{XX}(\omega) \mbf{\alpha} \mbf{\beta}^{*\top} {\mbf f}_{YY}(\omega) \mbf{\beta}} }\right|^2
     \text{such that, } \; \mbf{\alpha}^{*\top} {\mbf f}_{XX}(\omega) \mbf{\alpha} = \mbf{\beta}^{*\top} \mbf{f}_{YY}(\omega) \mbf{\beta} = 1.
 \end{equation*}

Existing approaches to estimating canonical coherence typically rely on moment-based estimators of spectral matrices \citep[]{brillinger2001time, Dehon2000robustCCA}. A key limitation of these estimators is their sensitivity to outliers, which can significantly distort the results. Furthermore, current methods are restricted to singleton-frequency analyses (rather then frequency-band), which do not align well with the band-specific interpretation of latent oscillations commonly used in fields such as neuroscience.
To mitigate these limitations, we develop a new characterization and estimation for canonical coherence that captures the monotonic dependence structure among the signals for a specific frequency band, $\Omega$.


\section{Methodology}\label{chap:methodology}

We now develop our proposed approach. Section~\ref{subsec2:CBC} defines the canonical band-coherence (CBC) and its equivalent form in terms of spectral density matrix. Section~\ref{subsec2:estimation} provides the estimation procedure of Kendall's tau-based canonical coherence (KenCoh) and Section~\ref{subsec4:inference} details the non-parametric hypothesis test used in the analyses. 

\subsection{Canonical Band-Coherence} \label{subsec2:CBC}
Recall the definition of zero-mean weakly stationary time series $\{\mbf{Z}(t)\}_{t = 1}^T \in \mathbb{R}^D$ defined in Section~\ref{subsec:limitation}.
Let $\{Z_j^{\Omega}(t)\}_{t = 1}^T$ denote the bandpass-filtered $\{Z_j(t)\}_{t = 1}^T$ in the $\Omega$-band for ${j = 1, \dots, D}$. 
For a multivariate Gaussian signals, we denote the lagged cross-dependence matrix among filtered signals, on $\Omega$, denoted $\bs{\Gamma}_Z(\Omega, \ell)$, as 
\begin{equation}
    \Gamma_{jk}(\Omega, \ell) = \Cor(Z_j^{\Omega}(t-\ell), Z_k^{\Omega}(t)) = \Cor(Z_j^{\Omega}(t), Z_k^{\Omega}(t+\ell)), \label{Pjk}
\end{equation}
for all $j, k = 1, \dots, D $ and $\ell = 0, \pm 1, \dots, \pm L$. This is also the coherency matrix in \cite{ombao2022spectral}. We now formally define CBC for a class of meta-elliptic distribution.

\begin{definition}[Canonical Band-Coherence] \label{definition:CBC}
Let $\bs{Z}(t) =  (\bs{X}^\top(t), \bs{Y}^\top(t))^\top = (Z_1(t), \dots, Z_{D}(t))$, for $t = 1, \dots, T$, be a weakly stationary time series and $\bs{Z}^{\Omega}(t)$, the vector of corresponding filtered series at frequency band $\Omega$, follow
a multivariate elliptic distribution, that is, 
  $$\begin{pmatrix}
        \bs{Z}^{\Omega}(t) \\ \bs{Z}^{\Omega}(t+\ell) 
    \end{pmatrix} \sim \mathcal{E} \left(\bs{0},\begin{pmatrix}
        \bs{\Gamma}_{Z}(\Omega, 0) & \bs{\Gamma}_{Z}(\Omega, \ell) \\
        \bs{\Gamma}_{Z}(\Omega, -\ell) & \bs{\Gamma}_{Z}(\Omega, 0)
    \end{pmatrix}, \psi \right), \; \text{ with }$$
    $$\bs{\Gamma}_Z(\Omega, \ell) = \begin{pmatrix}
        \bs{\Gamma}_{XX}(\Omega, \ell) & \bs{\Gamma}_{XY}(\Omega, \ell) \\
        \bs{\Gamma}_{YX}(\Omega, \ell) & \bs{\Gamma}_{YY}(\Omega, \ell)
    \end{pmatrix},$$ 
    where $\psi(\cdot)$ is the generator function of the elliptic distribution \citep[see][]{fang2002meta}, 
    $\ell =0, \pm 1, \dots, \pm L$, 
    and $\Gamma_{jk}$ is as defined in Equation~\eqref{Pjk}, for all $j \neq k = 1, \dots, D$.
    The \textit{canonical band-coherence} (CBC) between ${\bs X}(t) \in \mathbb{R}^P$ and ${\bs Y}(t) \in \mathbb{R}^{Q}$, at $\Omega$, 
    \begin{equation}
        \kappa(\Omega) := \max_{ \{\bs{a}_\Omega, \bs{b}_\Omega, \ell \}} \left\{ \bs{a}^{\top}_\Omega \bs{\Gamma}_{XY}(\Omega, \ell) \bs{b}_\Omega\right\}^2, \label{cancor2}
    \end{equation}
    where $\bs{a}_\Omega \in \mathbb{R}^{P}$, $\bs{b}_\Omega \in \mathbb{R}^{Q}$, and $\bs{a}_\Omega^\top \Gamma_{XX}(\Omega, 0)\bs{a}_\Omega = \bs{b}_\Omega^\top \Gamma_{YY}(\Omega, 0)\bs{b}_\Omega = 1$.
\end{definition}

\noindent \textit{Remark 1}: It is important to note here that ${\mbf \Gamma}_{XY}(\Omega, \ell) = {\mbf \Gamma}_{YX}(\Omega, -\ell)^\top$. 

\noindent \textit{Remark 2:} The vectors $\mbf{a}_{\Omega}$ and $\mbf{b}_{\Omega}$, which is also the unstandardized version of weights in Section~\ref{subsec:limitation}, are referred to as the \textit{canonical directions}. These vectors measure the contributions of each filtered signals to $\kappa_\Omega$ (see Figure~\ref{DeltaConnectivitySubj9} for instance). 

\noindent Proposition~\ref{Theorem:definition} gives an equivalent representation of $\kappa(\Omega)$ in terms of the spectral matrix. 

\begin{proposition} \label{Theorem:definition}
Let $c_{\Omega}(s) \in \mathbb{R}, \ s \in \mathbb{Z}$, be a linear filter, such that, ${\sum_{s = -\infty}^{\infty} |c_{\Omega}(s)| < \infty}$. Given ${\{\mbf{Z}(t)\}_{t = 1}^T}$ be a weakly stationary $D$-dimensional time-series as defined in Definition~\ref{definition:CBC}, we have ${Z}_j^\Omega(t) = \sum_{s = -\infty}^\infty c_\Omega(s) {Z}_j(t-s)$ 
for $j = 1, \dots, D$.
Then, 
\begin{gather}
    \kappa(\Omega) = \max_{ \{\mbf{a}_\Omega , \mbf{b}_\Omega, \ell \}} \left| \mbf{a}^{\top}_\Omega \left[ \int_{\{ \omega: \Omega/S\}} \mbf{f}_{XY}(\omega) e^{\rmi 2\pi\omega \ell} \rmd \omega \right] \mbf{b}_\Omega \right|^2 \;\; \label{cancor-theorem} \\
    \text{such that, } {\mbf a}_\Omega^\top {\mbf f}_{XX}(\Omega) {\mbf a}_\Omega = {\mbf b}_\Omega^\top {\mbf f}_{YY}(\Omega)  {\mbf b}_\Omega = 1, \notag
\end{gather}
where $\mbf{a}_{\Omega} \in \mathbb{R}^P$ and $\mbf{b}_{\Omega} \in \mathbb{R}^Q$ are the canonical directions.
\end{proposition}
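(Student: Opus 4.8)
The plan is to show that the two optimization problems in Definition~\ref{definition:CBC} and Proposition~\ref{Theorem:definition} have the same objective value by proving that the relevant second-order quantities of the filtered series can be written as band-integrals of the spectral density matrix. Concretely, I would first establish, for any $j,k \in \{1,\dots,D\}$ and any lag $\ell$, the identity
\begin{equation*}
    \Cov\!\left(Z_j^{\Omega}(t-\ell), Z_k^{\Omega}(t)\right) = \int_{-1/2}^{1/2} \left|C^{(\Omega)}(\omega)\right|^2 f_{jk}(\omega)\, e^{\rmi 2\pi\omega \ell}\, \rmd\omega = \frac{1}{2\delta}\int_{\{\omega:\, \Omega/S\}} f_{jk}(\omega)\, e^{\rmi 2\pi\omega\ell}\,\rmd\omega,
\end{equation*}
which follows from the Cram\'er representation: substituting $Z_j^{\Omega}(t) = \sum_s c_\Omega(s) Z_j(t-s)$ and using absolute summability of $c_\Omega$ to interchange the sum and the stochastic integral, the cross-covariance picks up the factor $C^{(\Omega)}(\omega)\,\overline{C^{(\Omega)}(\omega)} = |C^{(\Omega)}(\omega)|^2$, which by Equation~\eqref{linearfilter} is $1/(2\delta)$ on the band $\Omega/S$ and $0$ elsewhere. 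Taking $\ell = 0$ and $j=k$ gives $\Var(Z_j^{\Omega}(t)) = \frac{1}{2\delta} f_{jj}(\Omega)$ with $f_{jj}(\Omega)$ as in Equation~\eqref{f-Omega}.

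Next I would convert these covariances into the correlation-matrix entries $\Gamma_{jk}(\Omega,\ell)$ appearing in \eqref{Pjk}. Writing $D_\Omega = \operatorname{diag}\big(\sqrt{\Var(Z_1^\Omega(t))},\dots,\sqrt{\Var(Z_D^\Omega(t))}\big)$, we have $\bs{\Gamma}_Z(\Omega,\ell) = D_\Omega^{-1}\,\bs{\Sigma}_Z(\Omega,\ell)\,D_\Omega^{-1}$, where $\bs{\Sigma}_Z(\Omega,\ell)$ is the lagged cross-covariance matrix, whose $XY$-block is $\frac{1}{2\delta}\int_{\Omega/S} \mbf{f}_{XY}(\omega) e^{\rmi 2\pi\omega\ell}\rmd\omega$ and whose $XX$- and $YY$-diagonal blocks at lag $0$ are $\frac{1}{2\delta}\mbf{f}_{XX}(\Omega)$ and $\frac{1}{2\delta}\mbf{f}_{YY}(\Omega)$. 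The common factor $1/(2\delta)$ is the key simplification: it appears in the numerator block $\bs{\Gamma}_{XY}$ and, via $D_\Omega$, in the normalization constraints, so it will cancel in the constrained ratio.

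Then I would perform the change of variables in the optimization. Substituting $\bs{a}_\Omega = c\,\tilde{\bs a}_\Omega$ and $\bs{b}_\Omega = c'\,\tilde{\bs b}_\Omega$ to pass between the constraint $\bs a_\Omega^\top \bs\Gamma_{XX}(\Omega,0)\bs a_\Omega = 1$ and the constraint $\bs a_\Omega^\top \mbf f_{XX}(\Omega)\bs a_\Omega = 1$, one checks that the objective $\{\bs a_\Omega^\top \bs\Gamma_{XY}(\Omega,\ell)\bs b_\Omega\}^2$ equals $\big|\tilde{\bs a}_\Omega^\top \big[\int_{\Omega/S}\mbf f_{XY}(\omega)e^{\rmi 2\pi\omega\ell}\rmd\omega\big]\tilde{\bs b}_\Omega\big|^2$ because the $1/(2\delta)$ in the numerator is exactly matched by the product of the normalizing scalars $c\,c'$ coming from the two constraints (each contributing a $\sqrt{1/(2\delta)}$ through the diagonal matrix $D_\Omega$ restricted to the $X$- and $Y$-blocks). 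Taking the maximum over $\{\bs a_\Omega,\bs b_\Omega,\ell\}$ on both sides yields \eqref{cancor-theorem}. I would also note that since the $XX$- and $YY$-blocks of the filtered covariance at lag $0$ are real symmetric and the cross-spectral integral can be reduced to a real bilinear form over real $\bs a_\Omega,\bs b_\Omega$ (the modulus absorbs the phase), working with real canonical directions is without loss of generality — this point is where I would be slightly careful, since naively $\mbf f_{XY}(\omega)$ is complex-valued; the resolution is that $\overline{f_{jk}(\omega)} = f_{jk}(-\omega)$ for real series and the band $\Omega/S$ is symmetric under $\omega \mapsto -\omega$, so the imaginary parts contribute terms that are handled by the lag parameter $\ell$ and the outer modulus.

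The main obstacle I anticipate is the bookkeeping around the modulus, the phase $e^{\rmi 2\pi\omega\ell}$, and the reality of the canonical directions: Definition~\ref{definition:CBC} is phrased for real Gaussian/meta-elliptic filtered series with real $\bs a_\Omega,\bs b_\Omega$ and an un-moduled square, whereas Proposition~\ref{Theorem:definition} writes a modulus-squared of a complex integral. Making the passage between these two forms rigorous — showing that maximizing over the lag $\ell$ in the real-domain correlation picture is equivalent to maximizing the modulus of the complex spectral integral — is the delicate step; everything else (interchanging sums and integrals using absolute summability, reading off $|C^{(\Omega)}|^2$ from \eqref{linearfilter}, and the cancellation of $1/(2\delta)$) is routine.
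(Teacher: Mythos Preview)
Your proposal is correct and follows essentially the same route as the paper's proof: express the lagged covariance of the filtered series as $\int |C^{(\Omega)}(\omega)|^2 f_{jk}(\omega)e^{\rmi 2\pi\omega\ell}\,\rmd\omega$, invoke Equation~\eqref{linearfilter} to restrict the integral to the band, and then observe that the common normalization factor cancels between numerator and constraint so that the $\bs{\Gamma}$-constrained problem in Definition~\ref{definition:CBC} coincides with the $\mbf f(\Omega)$-constrained problem in \eqref{cancor-theorem}.

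The only cosmetic difference is in how the passage from correlations $\bs{\Gamma}$ to covariances is handled: the paper invokes the elliptic-distribution identity $\bs{\Gamma}_{XY}(\Omega,\ell)=K\,\bs{\Sigma}_{XY,\Omega}(\ell)$ with the scalar $K=-2\psi'(0)$ and lets that scalar cancel in the ratio, whereas you carry an explicit diagonal standardization $D_\Omega$ and let the $1/(2\delta)$ cancel. Both arguments are equivalent at the level of the constrained ratio. Your caution about the modulus and the reality of the integral is well-placed and in fact goes beyond what the paper's proof spells out; the paper simply writes the chain of equalities without discussing the complex phase, so your sketch of the resolution via $f_{jk}(-\omega)=\overline{f_{jk}(\omega)}$ is a welcome addition rather than a gap.
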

The proof of Proposition~\ref{Theorem:definition} starts with expanding Equation~\eqref{cancor2} and expressing it in terms of the covariance of filtered series. The covariance of filtered series is then rewritten in terms of the spectral density matrix and simplification of expression yields to Equation~\eqref{cancor-theorem}.
The complete proof of Proposition~\ref{Theorem:definition} is given in Supplementary Material, Section A-1.1. Proposition~\ref{Theorem:definition} shows that through linear filtering, we can maximize the integral of cross spectral density matrix corresponding to the frequency band of interest. It also shows that by finding the lead-lag that maximizes $\kappa(\Omega)$, the definition takes into account the phase-shift of the oscillation, which in this case is equivalent to $2\pi\omega \ell$.

\begin{proposition} \label{proposition:solution} 
Consider the weakly stationary time series filtered at frequency band $\Omega$, ${(\bs{X}^{\Omega\top}(t), \bs{Y}^{\Omega\top}(t))^\top \in \mathbb{R}^{P+Q}}$, and its lagged cross-dependence matrix $\bs{\Gamma}(\Omega, \ell)$ as in Definition~\ref{definition:CBC}, for $\ell = 0, \pm 1, \dots, \pm L$.
Let ${\Lambda}_j(\Omega, \ell)$ be the $j$-th largest eigenvalue of the following:
    \begin{gather}
    \bs{\Gamma}_{XX}^{-1/2}(\Omega, 0)\bs{\Gamma}_{XY}(\Omega, \ell)\bs{\Gamma}_{YY}^{-1}(\Omega, 0)\bs{\Gamma}_{YX}(\Omega, -\ell)\bs{\Gamma}_{XX}^{-1/2}(\Omega, 0), \; \;  \label{Pxy} \\
    \bs{\Gamma}_{YY}^{-1/2}(\Omega, 0)\bs{\Gamma}_{YX}(\Omega, -\ell)\bs{\Gamma}_{XX}^{-1}(\Omega, 0)\bs{\Gamma}_{XY}(\Omega, \ell)\bs{\Gamma}_{YY}^{-1/2}(\Omega, 0). \label{Pyx} 
\end{gather}
for $j = 1, \dots, \min(P,Q)$. Moreover, for non-zero $\Lambda_{1}(\Omega, \ell)$, we denote the corresponding eigenvectors for Equations~\eqref{Pxy} and \eqref{Pyx} at lag $\ell$, as $\mbf{u}_{\Omega} \in \mathbb{R}^P$ and $\mbf{v}_{\Omega} \in \mathbb{R}^Q$, respectively.
Then, $\kappa_\Omega = \max_{\ell} \{{\Lambda}_{1}(\Omega, \ell)\}$,
    \begin{equation}
       \mbf{a}_{\Omega} := \bs{\Gamma}_{XX}(\Omega, 0)^{-1/2}\mbf{u}_\Omega \ \text{and } \  \mbf{b}_{\Omega} := \bs{\Gamma}_{YY}(\Omega, 0)^{-1/2}\mbf{v}_\Omega. \label{canvec}
    \end{equation}
\end{proposition}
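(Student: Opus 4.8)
\textit{Proof proposal.} The plan is to solve the constrained maximization defining $\kappa(\Omega)$ in \eqref{cancor2} as a nested problem: first maximize over the canonical directions $\bs{a}_\Omega, \bs{b}_\Omega$ for a \emph{fixed} lag $\ell$, and then maximize the resulting value over the finite lag set $\ell \in \{0, \pm 1, \dots, \pm L\}$. The outer maximization over lags is immediate once the inner problem is solved, so the argument reduces to the classical canonical correlation computation for fixed $\ell$, with the only extra bookkeeping coming from $\ell$ and from Remark~1.

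First I would whiten the two quadratic constraints. Since $\bs{\Gamma}_{XX}(\Omega,0)$ and $\bs{\Gamma}_{YY}(\Omega,0)$ are symmetric positive definite (they are correlation matrices of the filtered series, assumed nonsingular), their symmetric square roots $\bs{\Gamma}_{XX}^{1/2}(\Omega,0)$, $\bs{\Gamma}_{YY}^{1/2}(\Omega,0)$ are well defined, symmetric, and invertible. Setting $\bs{p} = \bs{\Gamma}_{XX}^{1/2}(\Omega,0)\bs{a}_\Omega$ and $\bs{q} = \bs{\Gamma}_{YY}^{1/2}(\Omega,0)\bs{b}_\Omega$ turns the constraints $\bs{a}_\Omega^\top\bs{\Gamma}_{XX}(\Omega,0)\bs{a}_\Omega = \bs{b}_\Omega^\top\bs{\Gamma}_{YY}(\Omega,0)\bs{b}_\Omega = 1$ into $\|\bs{p}\| = \|\bs{q}\| = 1$ and rewrites the objective of \eqref{cancor2} as $\bigl(\bs{p}^\top \bs{M}(\Omega,\ell)\,\bs{q}\bigr)^2$, where $\bs{M}(\Omega,\ell) := \bs{\Gamma}_{XX}^{-1/2}(\Omega,0)\,\bs{\Gamma}_{XY}(\Omega,\ell)\,\bs{\Gamma}_{YY}^{-1/2}(\Omega,0)$ is a real $P\times Q$ matrix. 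Introducing the singular value decomposition $\bs{M}(\Omega,\ell) = \sum_i \sigma_i\, \bs{p}_i \bs{q}_i^\top$ with $\sigma_1 \ge \sigma_2 \ge \cdots \ge 0$ and orthonormal $\{\bs{p}_i\}$, $\{\bs{q}_i\}$, a Cauchy--Schwarz estimate on $\bs{p}^\top \bs{M}(\Omega,\ell)\bs{q} = \sum_i \sigma_i (\bs{p}^\top\bs{p}_i)(\bs{q}^\top\bs{q}_i)$ gives $\bigl(\bs{p}^\top \bs{M}(\Omega,\ell)\bs{q}\bigr)^2 \le \sigma_1^2$ for all unit $\bs{p},\bs{q}$, with equality at $\bs{p}=\bs{p}_1$, $\bs{q}=\bs{q}_1$. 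Hence the inner maximum equals $\sigma_1(\Omega,\ell)^2$.

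Next I would identify $\sigma_1^2$ and the optimizing vectors with the objects in the statement. Using Remark~1 in the form $\bs{\Gamma}_{XY}(\Omega,\ell)^\top = \bs{\Gamma}_{YX}(\Omega,-\ell)$ together with symmetry of the square roots, a direct computation gives $\bs{M}(\Omega,\ell)\bs{M}(\Omega,\ell)^\top = \bs{\Gamma}_{XX}^{-1/2}(\Omega,0)\bs{\Gamma}_{XY}(\Omega,\ell)\bs{\Gamma}_{YY}^{-1}(\Omega,0)\bs{\Gamma}_{YX}(\Omega,-\ell)\bs{\Gamma}_{XX}^{-1/2}(\Omega,0)$, i.e. the $P\times P$ matrix in \eqref{Pxy}, and likewise $\bs{M}(\Omega,\ell)^\top\bs{M}(\Omega,\ell)$ is the $Q\times Q$ matrix in \eqref{Pyx}. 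Since left/right singular vectors of $\bs{M}(\Omega,\ell)$ are eigenvectors of $\bs{M}\bs{M}^\top$ / $\bs{M}^\top\bs{M}$ with eigenvalue $\sigma_i^2$ (and the two matrices share their nonzero eigenvalues), we get $\sigma_1(\Omega,\ell)^2 = \Lambda_1(\Omega,\ell)$, the leading left singular vector is $\bs{p}_1 = \bs{u}_\Omega$, and the leading right singular vector is $\bs{q}_1 = \bs{v}_\Omega$. Undoing the whitening then yields $\bs{a}_\Omega = \bs{\Gamma}_{XX}^{-1/2}(\Omega,0)\bs{u}_\Omega$ and $\bs{b}_\Omega = \bs{\Gamma}_{YY}^{-1/2}(\Omega,0)\bs{v}_\Omega$, exactly \eqref{canvec}. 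Finally, maximizing the fixed-$\ell$ optimum $\Lambda_1(\Omega,\ell)$ over the finite lag set gives $\kappa(\Omega) = \max_\ell \Lambda_1(\Omega,\ell)$, completing the argument.

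The main (and essentially only) delicate point is the pairing of the leading left and right singular vectors when $\Lambda_1(\Omega,\ell)$ is not simple: $\bs{u}_\Omega$ and $\bs{v}_\Omega$ must be chosen as a matched singular pair (equivalently $\bs{v}_\Omega \propto \bs{M}(\Omega,\ell)^\top \bs{u}_\Omega$), not as arbitrary unit eigenvectors of \eqref{Pxy} and \eqref{Pyx} in the leading eigenspace, since otherwise $\bs{u}_\Omega^\top \bs{M}(\Omega,\ell)\bs{v}_\Omega$ need not equal $\sigma_1$. I would make this pairing explicit in the proof (or, for identifiability, assume $\Lambda_1(\Omega,\ell) > \Lambda_2(\Omega,\ell)$, under which the canonical directions are unique up to a common sign). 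The remaining ingredients---existence and symmetry of the matrix square roots, the real SVD of $\bs{M}(\Omega,\ell)$, and the Cauchy--Schwarz bound---are all standard.
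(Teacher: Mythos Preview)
Your proposal is correct and follows essentially the same route as the paper: both whiten the constraints to reduce \eqref{cancor2} to maximizing $(\bs{u}^\top \bs{M}(\Omega,\ell)\,\bs{v})^2$ over unit vectors, with $\bs{M}(\Omega,\ell)=\bs{\Gamma}_{XX}^{-1/2}(\Omega,0)\bs{\Gamma}_{XY}(\Omega,\ell)\bs{\Gamma}_{YY}^{-1/2}(\Omega,0)$, identify the optimum as the largest eigenvalue of $\bs{M}\bs{M}^\top$ and $\bs{M}^\top\bs{M}$ (the matrices \eqref{Pxy}--\eqref{Pyx} via Remark~1), and then maximize over $\ell$. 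The only mechanical difference is that the paper, following Mardia's textbook argument, optimizes in two stages (fix $\bs{v}$, maximize over $\bs{u}$ using the rank-one structure of $\bs{M}\bs{v}\bs{v}^\top\bs{M}^\top$, then maximize $\bs{v}^\top\bs{M}^\top\bs{M}\bs{v}$ over $\bs{v}$), whereas you go straight to the SVD of $\bs{M}$ and a Cauchy--Schwarz bound; the two are equivalent and equally standard. Your explicit caveat about pairing $\bs{u}_\Omega$ and $\bs{v}_\Omega$ as a matched singular pair when $\Lambda_1$ is not simple is a useful refinement that the paper leaves implicit.
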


\noindent 
The proof of Proposition~\ref{proposition:solution} is provided in the Supplementary Material, Section A-1.2.
It follows from Equation~\eqref{cancor2} that $\mbf{u}_{\Omega}$ and $\mbf{v}_{\Omega}$ satisfy the condition $\mbf{u}_{\Omega}^\top \mbf{u}_{\Omega} = \mbf{v}_{\Omega}^\top \mbf{v}_{\Omega} = 1$. For fixed $\Omega$, the vectors $\mbf{u}_{\Omega}$ and $\mbf{v}_{\Omega}$ are called the \textit{standardized canonical direction} of filtered $\mbf{X}(t)$ and $\mbf{Y}(t)$, respectively.

An advantage of the above model is the \underline{interpretability} of the parameters. Specifically, note that the eigenvalues and eigenvectors provide different lenses of connectivity. While the largest eigenvalue is equivalent to the maximum band-coherence between two sets of signals (which measures the strength of dependence), the eigenvectors show the relative contribution of the channels to the overall association (see Figure~\ref{DeltaConnectivitySubj9} for illustration).
This paper focuses on the \underline{structure} of connectivity (i.e., the relative contribution of the channels) between the two brain regions attributed to specific band-oscillation, and we develop a statistical procedure to test the difference between canonical directions.

\subsection{KenCoh: Estimation of $\kappa(\Omega)$ using Kendall's tau} \label{subsec2:estimation}

\begin{figure}
    \centering
    \includegraphics[width=1.0\textwidth]{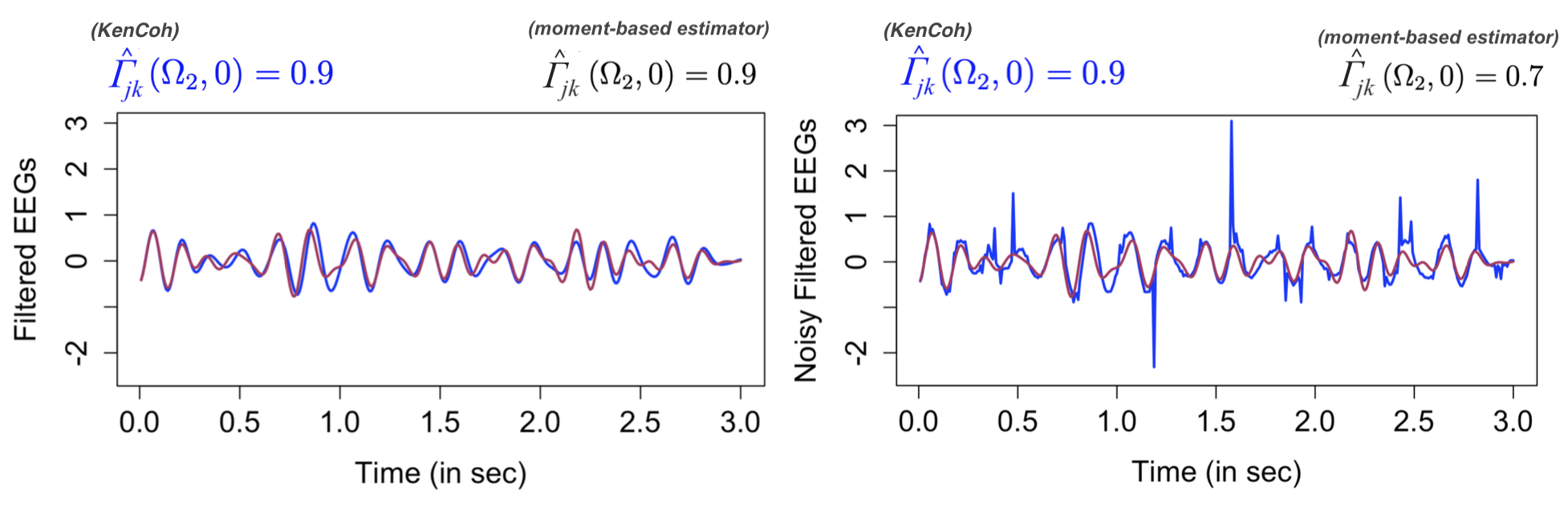}
    \caption{Two EEGs from \cite{Cao2019DrivingData} dataset, filtered at Theta band, $\Omega_2$. The KenCoh statistic at lag $\ell = 0$, i.e., $\hat{\Gamma}^{(1)}_{jk}(\Omega_2, 0)$, for data without contamination (Left) vs with contamination (Right). }
    \label{outliers}
\end{figure} 

Brain signals, including EEGs and calcium recordings, are contaminated with noise. 
The standard estimators of the spectral and covariance matrices are sensitive to outliers. This can produce misleading results about the strength of the connectivity and the brain functional structure captured by the eigenvalue and eigenvectors, respectively.
To handle the outliers, the proposed KenCoh method utilizes concordance of rankings of filtered signals. Here, we develop a \underline{novel method} which includes measures of channel contribution and estimation of dependency between two set of signals. That is, we generalize the concept of the classical canonical coherence by encompassing monotonic dependence and building on the robust approaches for dependence.
We develop a rank-based estimator for CBC for the class of meta-elliptic distribution \citep{fang2002meta, Langworthy2021CanonicalCopula}. 

Let $R^\Omega_j(t,s) = \text{rank}(Z^\Omega_j(t)) - \text{rank}(Z^\Omega_j(s))$. Also, let the number of concordance and discordance, denoted as $q_{jk}(\ell)$ and $q'_{jk}(\ell)$, respectively, to be
\[
q_{jk}(\Omega, \ell) = \sum_{1\leq t<s\leq T} \mathbb{I}_{\{(R^\Omega_j(t,s)R^\Omega_k(t-\ell,s-\ell)) > 0\}}; \; \ q'_{jk}(\Omega, \ell) = \sum_{1\leq t<s\leq T} \mathbb{I}_{\{(R^\Omega_j(t,s)R^\Omega_k(t-\ell,s-\ell)) < 0\}}. 
\]
\noindent where $\mathbb{I}_{\mathcal{B}}$ denotes the indicator function, taking the value $1$ if $\mathcal{B}$ holds and $0$ otherwise.  
We estimate the $j,k$-th element of the matrix $\bs{\Gamma}(\Omega,\ell)$, for $j,k = 1, \dots, D$, as the following:
\begin{equation}
    \hat{\Gamma}^{(1)}_{jk}(\Omega, \ell) = \sin\left(\frac{\pi (q_{jk}(\Omega, \ell) - q'_{jk}(\Omega, \ell))}{2{T \choose 2}}\right). \label{tsksm}
\end{equation}

\noindent Figure~\ref{outliers} shows a sample analysis for $\hat{\Gamma}^{(1)}_{jk}(\Omega_2, \ell)$ between the two sampled EEGs in \cite{Cao2019DrivingData}---i.e., let $Z^{\Omega_2}_j(t)$ and $Z^{\Omega_2}_k(t)$ be the filtered Fp1 and F7, respectively, where $\Omega_2 := \{\Omega: S\omega = 4, S\omega = 8\}$ (see Equation~\eqref{FreqBand}). 
We also compare the result when the signal in FP1 is contaminated, i.e., let $W_j(t) \sim 0.015t_{1}$, and let the contaminated Fp1 signals be $Z^{\Omega_2}_j(t) + W_j(t)$. 
Observe that the estimator uses ranks of two filtered signals, leading to distinct advantages when the data contain outliers or are subject to contamination.

\subsection{Test of equality between brain states} \label{subsec4:inference}

In this section, we develop a formal statistical inference procedure for testing differences in brain functional connectivity structure across various experimental conditions, patient groups, and brain states.
Since the development of KenCoh is primarily motivated by virtual-reality driving experiment \citep{Cao2019DrivingData} and calcium imaging data in \cite{reyesvallejo2023neurogliaform}, one of our goals is to study the difference in brain network connectivity under two known conditions, namely i) alert vs drowsy in the driving data and ii) in-sequence vs oddball in the mice auditory experiment.

This driving experiment consists of multiple trials considered as independent replicates in the analysis. 
Each trial, say $n = 1, \dots, N$, corresponds to a block of multichannel EEG with sampling rate, $S$. A trial, $n$, contains $T$ number of time points. We denote a block of filtered multichannel EEG in trial $n$ as $\{\mbf{Z}^{\Omega}(t)\}_{t \in \mathcal{T}_n}$ where ${\mathcal{T}_n = \{(n-1)T + 1, \dots, (n-1)T + T\}}$, for $n = 1, \dots, N$. Hence, the entire dataset is composed of $TN$ time points. 
For each trial, $n$, we obtain the estimates of first canonical directions denoted by ${\hat{\mbf{a}}}_{\Omega}(n)$ for the first region and ${\hat{\mbf{b}}}_{\Omega}(n)$ for the second region.  
Define $G_n = 0$ if brain state during trial $n$ is $0$; and $G_n = 1$ if the brain state during trial $n$ is $1$. 
Hence, we can partition the $N$ trials into two sets, denoted as $\mathcal{N}_g$, for $g = 0,1$, where $\mathcal{N}_g = \{n : G_n = g \}$ represents the set of trials in brain state $g$. 
Hence, we decompose the trial-specific, $n$, canonical directions to be ${\mbf{B}}_{\Omega,n}^{(g)} = {\mbf{B}}_\Omega^{(1)} G_n +  {\mbf{B}}_\Omega^{(0)} (1 - G_n)$,
where ${\mbf{B}}_\Omega^{(g)} = (a_{1,\Omega}^{(g)} \ , \dots, a_{P,\Omega}^{(g)} \ , {b}_{1,\Omega}^{(g)} \ , \dots, {b}_{Q,\Omega}^{(g)})^\top \in \mathbb{R}^D$, is the vector containing the canonical directions for brain state $g$ (see Equation~\eqref{canvec}).

Recall that the canonical directions defined in Equation~\eqref{canvec} and in Proposition~\ref{Theorem:definition} is a measure of relative contribution of each of the channels (in each group) to the maximal association. We now formulate a test for detecting differences in regional connectivity between the two brain states through the canonical directions. 
We consider the null hypothesis that there are no differences in the connectivity structure (with respect to oscillations at the frequency band $\Omega$) between the state 1 and state 2, i.e., 
$H_0: {\mbf{B}}_\Omega^{(0)} = {\mbf{B}}_\Omega^{(1)}$, against the alternative hypothesis that there is a difference, i.e., $H_1: {\mbf{B}}_\Omega^{(0)} \neq {\mbf{B}}_\Omega^{(1)}$. 
The difference between two vectors, ${\mbf{B}}_\Omega^{(0)}$ and ${\mbf{B}}_\Omega^{(1)}$, is measured using Hotelling's $T^2$ statistics \citep{Hotelling1931generalization}, which we define below. Let $\bar{\mbf{B}}_\Omega^{(g)} = \frac{1}{|\mathcal{N}_g|} \sum_{n \in \mathcal{N}_g} \hat{\mbf{B}}_{\Omega,n}^{(g)}$, for ${g = 0,1}$. 
Let ${\hat{\mbf{J}}_\Omega^{(g)}  = \frac{1}{|\mathcal{N}_g| -1} \sum_{n \in \mathcal{N}_g} \left( \hat{\mbf{B}}_{\Omega,n}^{(g)} - \bar{\mbf{B}}_\Omega^{(g)} \right) \left(\hat{\mbf{B}}_{\Omega,n}^{(g)} - \bar{\mbf{B}}_\Omega^{(g)} \right)^\top}$.
The test statistic is $$K_\Omega = (\bar{\mbf{B}}_0^\Omega - \bar{\mbf{B}}_1^\Omega)^\top \left(\frac{\hat{\mbf{J}}_\Omega^{(0)}}{|\mathcal{N}_0|} + \frac{\hat{\mbf{J}}_\Omega^{(1)}}{|\mathcal{N}_1|}\right)^{-1} (\bar{\mbf{B}}_0^\Omega - \bar{\mbf{B}}_1^\Omega).$$

The empirical distribution of $K_\Omega$, denoted by $\tilde{\mathbb{F}}_0$, under $H_0$ is approximated through permutation testing. 
Specifically, the group assignments of the trials are randomly permuted 5000 times, and $K_\Omega$ is computed for each permutation. This provides an empirical distribution for $K_\Omega$. 
Denote $p^\Omega = \tilde{\mathbb{F}}_0(K_\Omega)$ as the $p$-value of the observed test statistic $K_\Omega$. 
We reject the null hypothesis when $p^\Omega < \alpha^*$, where $\alpha^*$ is the adjusted level of significance for multiple tests \citep{benjamini2000adaptive} on different frequency bands, $\Omega$.


\section{Simulation Study}\label{chap:simulation}

In order to represent the five mutually-independent frequency bands in Section~\ref{sec:SpectralAssoc}, we use AR(2)-mixture model in \cite{ombao2022spectral}. 
In Section~\ref{subsec:CBC-AR2}, we present a multivariate spectral model with defined CBC. Section~\ref{subsec:simsettings} enumerates the various parameter settings for the simulation studies.  Section~\ref{subsec:comparison} discusses two other estimation methods for comparison. Lastly, Section~\ref{subsec:simresults} presents the results of simulation.

\subsection{Canonical coherence under AR(2) Mixture Model} \label{subsec:CBC-AR2}
Here, EEGs are reasonably modeled as superpositions of oscillations with random amplitudes. Second-order autoregressive processes or AR(2), whose polynomial operator admits complex-valued roots represent oscillatory activity, 
are used in this article as building blocks to simulate data \citep[]{gao2020evolutionary, granados2022brain}. 

We represent the jointly weakly-stationary time series, $\mbf{Z}(t) = (\mbf{X}(t)^\top, \mbf{Y}(t)^\top)^\top \in \mathbb{R}^D$, using a mixture of AR(2) processes.
Let $\mbf{W}(t) = (\mbf{W}^\top_X(t), \mbf{W}^\top_Y(t))^\top \in \mathbb{R}^D$ be a white noise where $\mathbb{E}[\bs{W}(t)] = \bs{0}_{D \times 1}$ and $\mathbb{V}[\bs{W}(t)] = \sigma^2_{W}\bs{I}_{D \times D}$, such that $\bs{I}_{D \times D}$ is the identity matrix, $\bs{0}_{D \times 1}$ is the zero vector and $\sigma^2_{W} \in (0,\infty]$, for all $t \in \mathcal{T}_n$, and $n = 1, \dots, N$. 
Define ${\mbf{O}(t) \in \mathbb{R}^{5}}$ as vector of Gaussian AR(2), i.e., 
${O_{m}(t) = \phi^{(m)}_1 O_{m}(t-1) +  \phi_2 O_{m}(t-2) + \xi_{m}(t)},$ for ${m = 1, \dots, 5}$,
where $\xi_{m}(t) \overset{IID}{\sim} N(0,\sigma^2_{m})$, 
${\phi_2 = -\exp(-2M)}$, ${\phi^{(m)}_1 = 2\exp(-M)\cos(2\pi \zeta_m)}$, and $\zeta_m \in (-0.5, 0.5)$ \citep{ombao2022spectral}. In this paper, we set $M = 1.05$, $\sigma_m = 0.5$ and ${\zeta_m = \{\zeta : S\zeta \in \{2, 6, 10, 20, 40\}\}}$ which corresponds to the five frequency bands. Here, we index the five frequency bands, i.e., $\Omega_m$, as, $m = 1$ for the Delta band, $m = 2$ for the Theta band, $m = 3$ for the Alpha band, $m = 4$ for the Beta band, and $m = 5$ for the Gamma band. 
To ensure that $O_{m}(t)$ has zero power outside $\Omega_m$, for $m = 1, \dots, 5$, we applied a Butterworth filter \citep[see][for discussion]{Redondo2023}. 

The $\mbf{X}(t)$ and $\mbf{Y}(t)$ share the same latent process $\mbf{O}(t)$ that are scaled by the mixing matrices $\mbf{E}_X(t) \in \mathbb{R}^{P \times 5}$ and $\mbf{E}_Y(t) \in \mathbb{R}^{Q \times 5}$.
We now consider the following model:
\begin{equation}
    \mbf{Z}(t) = 
    \begin{pmatrix}
        \mbf{X}(t) \\
        \mbf{Y}(t) 
    \end{pmatrix} = \begin{pmatrix}
         \mbf{E}_X(t) \ \mbf{O}(t) + \mbf{W}_X(t)  \\
        \mbf{E}_Y(t) \ h(\mbf{O}(t)) + \mbf{W}_Y(t) 
    \end{pmatrix} \; \; \; \text{for } t \in \mathcal{T}_n,   \label{rawsim}
\end{equation}

\noindent where $h(O_1, \dots, O_5) = (h_0(O_1), \dots, h_0(O_5))$ is a function with $h_0 : \mathbb{R} \rightarrow \mathbb{R}$.
Observe that both $\mbf{X}(t)$ and $\mbf{Y}(t)$ depend on the latent process $\mbf{O}(t)$, and the relation 
is dictated by the function $h : \mathbb{R}^5 \rightarrow \mathbb{R}^5$.
Moreover, the mixing matrices, $\mbf{E}_X(t)$ and $\mbf{E}_Y(t)$, are considered to be fixed within a trial $n$, that is, ${\{\mbf{E}_X(t)\}_{t \in \mathcal{T}_n} := \mbf{E}_X(n)}$ and ${\{\mbf{E}_Y(t)\}_{t \in \mathcal{T}_n} := \mbf{E}_Y(n)}$. To simulate two distinct groups of MTS, we denote the common matrices as
\begin{equation}
    \{(\mbf{E}_X(n)^\top, \mbf{E}_Y(n)^\top)^\top\}_{n \in \mathcal{N}_g} := \mbf{E}^{(g)} \in \mathbb{R}^{D\times 5}, \ \; \text{for } g = 0,1. \label{mixmat}
\end{equation}
Denote ${E}_{jm}^{(g)}$ as the elements of the $\mbf{E}^{(g)}$ matrix, for $g = 0,1$, $j = 1,\dots, D$ and ${m = 1, \dots, 5}$. If $h(\bs{x}) = \bs{x} \in \mathbb{R}^5$, i.e., $\bs{X}(t)$ and $\bs{Y}(t)$ have linear relationship, then we have a closed form solution for the band-specific spectral density matrix (see Equation~\eqref{f-Omega}). Specifically, we have the analytical expressions of canonical coherence and canonical directions, ${\mbf{B}}_\Omega^{(g)}$, for frequency band $\Omega$ and group $g$. The analytical solution for Equation~\ref{cancor-theorem} is given in Supplementary Material, Section A-3. 

Observe that the analytical solutions are functions of the $\mbf{E}^{(g)}$ matrices and variance of the oscillation. 
Therefore, we may have varying settings for $\mbf{W}(t)$, and still retrieve the analytical expressions of canonical coherence for linear relationship between $\mbf{X}(t)$ and $\mbf{Y}(t)$. A heavy-tailed white noise, $W_j(t)$, mimics a contamination in the sampling of signal or other irremovable artifacts from EEG recording. In the next subsection, we provide a measure of distance for 
the differences in connectivity between the two brain states. 

\subsection{Simulation Settings} \label{subsec:simsettings}

Our simulation model in Equation~\eqref{rawsim} has the following components: (i) the mixing-matrices defined in Equation~\eqref{mixmat}, i.e., $\mbf{E}^{(g)} \in \mathbb{R}^{D \times 5}$, for $g = 0,1$, (ii) the five oscillatory signals, i.e., $\bs{O}(t) \in \mathbb{R}^{5}$, (iii) the function $h(\cdot) : \mathbb{R}^{5} \rightarrow \mathbb{R}^{5}$, and (iv) the white-noise component, i.e., $\bs{W}(t) \in \mathbb{R}^{D}$. Throughout the simulation study, we
have the same process for $\bs{O}(t)$ in all simulated cases, see Section~\ref{subsec:CBC-AR2}. The remaining components are varied as follows. 

We begin with white noise components under three settings---(a) ${\{W_j(t)\}_{t = 1}^T \stackrel{IID}{\sim} N(0,1)}$; (b) zero mean $\{W_j(t)\}_{t = 1}^T \stackrel{IID}{\sim} 0.13t_{(3)}$, for $j = 1, \dots, D$; and (c) $\{W_j(t)\}_{t = 1}^T \stackrel{IID}{\sim} N(0,1)$, for $j = 1, \dots, P$ and zero-mean $\{W_j(t)\}_{t = 1}^T \stackrel{IID}{\sim} 0.1t_{(1)}$ for ${j = P+1, \dots, D}$. The first scenario is the Gaussian case, while the other two are the heavy-tailed cases. In particular, the moments of $\mbf{X}(t)$ and $\mbf{Y}(t)$ in the third case does not exist due to the added noise $\bs{W}(t)$. 
 
\begin{figure} 
    \centering
    \includegraphics[width=0.75\textwidth]{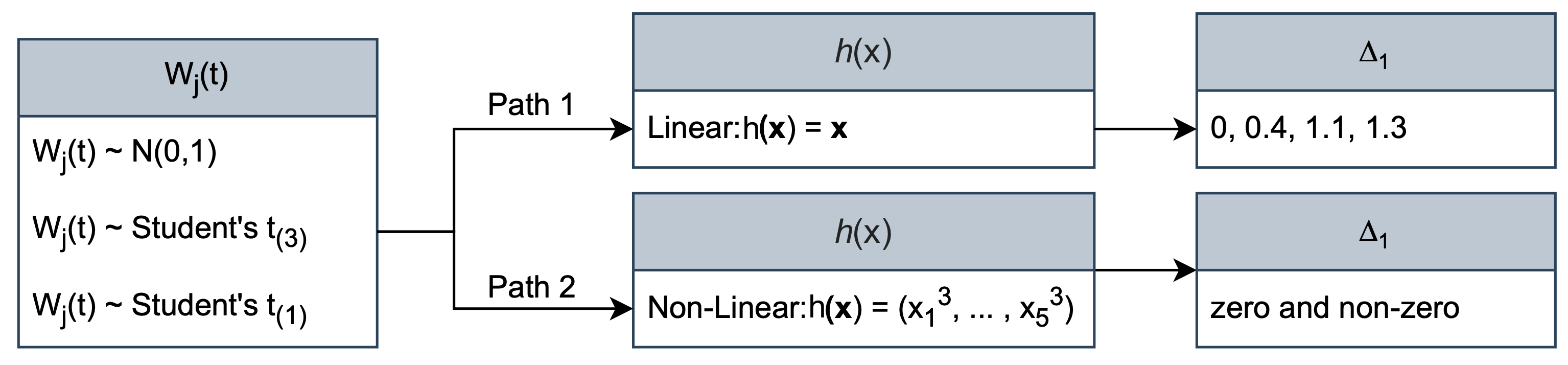}
    \includegraphics[width=0.8\textwidth]{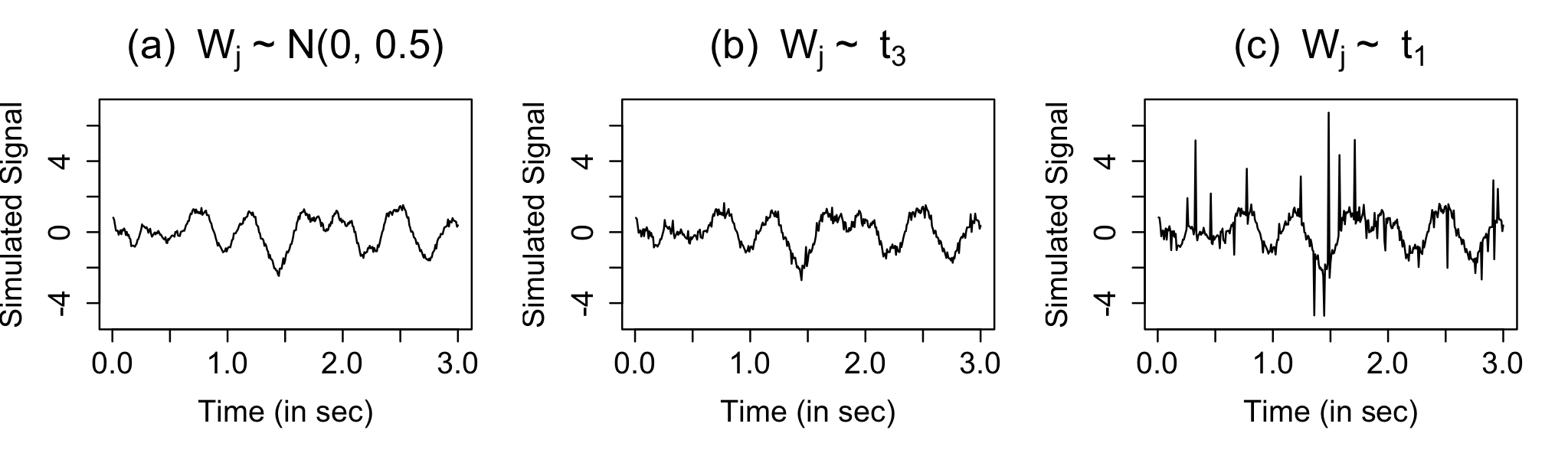}
    \caption{(Top) Simulation settings considered for the numerical experiments. (Bottom) Sample simulated signal, $Z_j(t)$, given the three settings of $W_j(t)$, for a fixed $j$. }
    \label{SimSignals}  
\end{figure}

Next, we let $h(\cdot): \mathbb{R}^5 \rightarrow \mathbb{R}^5$ in Equation \eqref{rawsim} to be either (i) linear, i.e., $h(\bs{x}) = \bs{x}$, or (ii) non-linear, i.e., $h(\bs{x}) = (x_1^3, \dots, x_5^3)^\top$, for $\bs{x} \in \mathbb{R}^5$. Aside from this, we vary the elements of mixing-matrices $\mbf{E}^{(g)}$, for $g = 0,1$. 
Denote $\Delta_{m}$ to be the distance the analytical value of canonical directions between the two states for oscillation $m$, that is,
\begin{equation}
    \Delta_{m} = \frac{||\mbf{u}^{(0)}_{{m}} - \mbf{u}^{(1)}_{{m} }||_{L_2}}{\sqrt{P}} + \frac{||\mbf{v}^{(0)}_{{m}} - \mbf{v}^{(1)}_{{m}}||_{L_2}}{\sqrt{Q}}, \ \; \text{for } m = 1, \dots, 5, \label{deltam}
\end{equation}
where $||\cdot||_{L_2}$ is the $L_2$-norm, $\mbf{u}^{(g)}_{{m}} \in \mathbb{R}^P$ and $\mbf{v}^{(g)}_{{m}} \in \mathbb{R}^Q$ are the standardized canonical directions in Equation~\ref{canvec}, such that $\omega \in \Omega_m/S$, for $g = 0,1$.
The $m$-th column of $\mbf{E}^{(g)} \in \mathbb{R}^{D\times 5}$ is then constructed, such that, the standardized canonical directions of $\{\mbf{Z}(t)\}_{t \in \{\mathcal{T}_n | n \in \mathcal{N}_0\}}$ and $\{\mbf{Z}(t)\}_{t \in \{\mathcal{T}_n | n \in \mathcal{N}_1\}}$ has distance $\Delta_{m}$.
In our numerical experiments, we explored the difference between the two states in the Delta band, i.e., $\Delta_1 \in \{0, 0.4, 1.1, 1.3\}$ while the differences in the other frequency bands were held fixed. 

For each settings defined in Figure~\ref{SimSignals}, 100 $D$-dimensional time series were generated using Equation~\eqref{rawsim} and the test of hypothesis in Section~\ref{subsec4:inference} was applied. 
We aggregated the $p$-values of the replicated tests using the Fisher's method \citep{Yoon2021powerful, Fisher1970statistical}.  
Let $p_r \sim \text{Uniform}(0,1)$ be the $p$-value for the $r$-th test replicate.
Under $H_0$, we have 
\begin{equation}
    \widetilde{p} = -2\sum_{r=1}^{100}\log(p_r) \sim \chi^2_{200}. \label{eq:ptilde}
\end{equation}
Note that $\widetilde{p}$ is large when $\{p_r\}_{r = 1}^{100}$ are dominated by small values, hence, the larger the value of $\widetilde{p}$, the higher the power of rejecting the null.
We then reported $\widetilde{p}$ along with its probability of being large, i.e., $1 - \mathbb{F}(\widetilde{p})$. If $1 - \mathbb{F}(\widetilde{p}) < 0.05$, we reject the null.
Following the setting of data from \cite{Cao2019DrivingData}, the sampling rate was set to 128Hz for a three-second non-overlapping blocks of MTS. 
A single block, then, consists of 384 time points. We set $N = 300$ with 50\% for $\mathcal{N}_0$ and another half for $\mathcal{N}_1$, yielding a data with dimension of $115,200 \times D$ where $D = 8$ (i.e, $p=q=4$). 

\subsection{Other Estimation Methods} \label{subsec:comparison}

For baseline comparison, we present two alternative estimators for the elements of $\bs{\Gamma}(\Omega,\ell)$ in Equation~\eqref{Pjk}---(a) based on Pearson's correlation estimator that we call VCov hereafter, and (b) based on the minimum covariance determinant (MCD) estimator.
The estimator based on Pearson's correlation coefficient is
\begin{equation}
    \hat{\Gamma}^{(2)}_{jk}(\Omega, \ell) = \frac{\hat{\mathbb{C}}_{jk}(\Omega, \ell)}{\sqrt{\hat{\mathbb{C}}_{jj}(\Omega, 0),\hat{\mathbb{C}}_{kk}(\Omega, 0)}}, \label{mle}
\end{equation}
where $\hat{\mathbb{C}}_{jk}(\Omega, \ell) = \frac{1}{T}\sum_{\forall t} (Z^{\Omega}_j(t) - \bar{Z}_j^{\Omega})(Z^{\Omega}_k(t-\ell) - \bar{Z}_k^{\Omega})$ and $\bar{Z}_j^{\Omega} = T^{-1}\sum_{\forall t} Z^{\Omega}_j(t)$ for $j,k = 1, \dots, D$.  
We also compare the performance of our proposed estimation method with a well-known robust covariance estimator, i.e., the Minimum Covariance Determinant (MCD) estimator \citep[see][]{hubert2018minimum}. This is implemented using the \texttt{robustbase} R-package. Details on the use of MCD estimator for $\bs{\Gamma}_Z(\Omega, \ell)$ is also included in the Supplementary Material, Section A-2. 
This robust estimator can be further improved for time series application, e.g., formulating statistics based on \cite{Kim2011robust}. However, this is beyond the scope of this paper.
Our simulation study provides the performance of these estimators, namely, KenCoh (in Equation~\eqref{tsksm}), VCov (in Equation~\eqref{mle}), and MCD (in Supplementary Material, Section A-2), in terms of their ability to detect differences in connectivity between two clusters of MTS. Provided the distribution is light-tailed, Pearson's correlation coefficient is a good candidate for estimation of $\bs{\Gamma}_Z(\Omega,\ell)$. However, if there are outliers in the data, this estimator is expected to perform poorly. While both VCov and MCD are well-suited for detecting linear associations, they are limited in their ability to capture nonlinear or more complex dependencies among variables, see Table~\ref{tab:NonLinPtilde}.

\subsection{Simulation Results} \label{subsec:simresults}

\begin{table}[]
\caption{Summary of $p$-values for null hypothesis that the canonical directions (weights of the channels) are equal for the two states at the Delta-band given that relationship between $\bs{X}(t)$ and $\bs{Y}(t)$ is \underline{linear}. Significant $p$-values are marked with $^{*}$ if it is below 10\%, $^{**}$ if it is below 5\%, and $^{***}$ if it is below 1\%. }
\label{tab:LinPtilde}
\resizebox{\textwidth}{!}{%
\begin{tabular}{@{}ccrrrrrrrrr@{}}
\toprule
 &
   &
  \multicolumn{3}{c}{\textbf{Case-1}} &
  \multicolumn{3}{c}{\textbf{Case-2}} &
  \multicolumn{3}{c}{\textbf{Case-3}} \\ \cmidrule(l){3-11} 
\multirow{-2}{*}{$\Delta_1$} &
  \multirow{-2}{*}{\textbf{Statistic}} &
  \multicolumn{1}{r}{VCov} &
  KenCoh &
  MCD &
  VCov &
  KenCoh &
  MCD &
  VCov &
  KenCoh &
  MCD \\ \midrule
 &
  { $\widetilde{p}$} &
  { 176.11} &
  \multicolumn{1}{r}{{ 146.36}} &
  \multicolumn{1}{r}{171.72} &
  \multicolumn{1}{r}{145.30} & 
  \multicolumn{1}{r}{173.81} & 
  \multicolumn{1}{r}{155.74} & 
  \multicolumn{1}{r}{141.13} &
  \multicolumn{1}{r}{140.60} &
  \multicolumn{1}{r}{158.57} \\ \cmidrule(l){2-11} 
\multirow{-2}{*}{\textbf{0}} &
  $1 - \mathbb{F}(\widetilde{p})$ &
  \multicolumn{1}{r}{0.8872} &
  0.9983 &
  0.9269 &
  0.9986 & 
  0.9095  & 
  0.9910 & 
  0.9994 &
  0.9995 &
  0.9861 \\ \midrule
 &
  $\widetilde{p}$ &
  214.71 &
  \multicolumn{1}{r}{226.10} &
  \multicolumn{1}{r}{167.35} &
  \multicolumn{1}{r}{230.47} & 
  \multicolumn{1}{r}{264.01} & 
  \multicolumn{1}{r}{185.67} & 
  \multicolumn{1}{r}{212.32} &
  \multicolumn{1}{r}{271.66} &
  \multicolumn{1}{r}{194.17} \\ \cmidrule(l){2-11} 
\multirow{-2}{*}{\textbf{0.4}} &
  $1 - \mathbb{F}(\widetilde{p})$ &
  \multicolumn{1}{r}{0.2262} &
  0.0994$^{*}$ &
  0.9551 & 
  0.0687$^{*}$ & 
  0.0016$^{**}$ & 
  0.7584 & 
  \multicolumn{1}{r}{0.2622} &
  0.0006$^{**}$ &
  0.6029 \\ \midrule 
 &
  $\widetilde{p}$ &
  1156.97 &
  \multicolumn{1}{r}{1077.24} &
  \multicolumn{1}{r}{350.57} &
  \multicolumn{1}{r}{1419.53} & 
  \multicolumn{1}{r}{1221.36} & 
  \multicolumn{1}{r}{296.20} & 
  262.24 &
  \multicolumn{1}{r}{356.87} &
  \multicolumn{1}{r}{232.92} \\ \cmidrule(l){2-11} 
\multirow{-2}{*}{\textbf{1.1}} &
  \textbf{$1 - \mathbb{F}(\widetilde{p})$} &
  \textless{}0.0001$^{***}$ &
  \textless{}0.0001$^{***}$ &
  { \textless{}0.0001$^{***}$} &
  { \textless{}0.0001$^{***}$} &
  { \textless{}0.0001$^{***}$} &
  { \textless{}0.0001$^{***}$} &
  0.0021$^{**}$ &
  { \textless{}0.0001$^{***}$} &
  0.0552$^{*}$ \\ \midrule
 &
  $\widetilde{p}$ &
  1260.56 &
  \multicolumn{1}{r}{1134.45} &
  \multicolumn{1}{r}{350.67} &
  \multicolumn{1}{r}{1954.73} & 
  1603.52 & 
  \multicolumn{1}{r}{277.04} & 
  \multicolumn{1}{r}{514.48} &
  \multicolumn{1}{r}{586.21} &
  \multicolumn{1}{r}{287.10} \\ \cmidrule(l){2-11} 
\multirow{-2}{*}{\textbf{1.3}} &
  \textbf{$1 - \mathbb{F}(\widetilde{p})$} &
  \textless{}0.0001$^{***}$ &
  \textless{}0.0001$^{***}$ &
  { \textless{}0.0001$^{***}$} &
  { \textless{}0.0001$^{***}$} & 
  { \textless{}0.0001$^{***}$} & 
  { 0.0003$^{***}$} & 
  { \textless{}0.0001$^{***}$} &
  { \textless{}0.0001$^{***}$} &
  0.0001$^{***}$ \\ \bottomrule
\end{tabular}%

}
\end{table}

In this section, we summarize the results of the power analysis for the examples introduced in Section~\ref{subsec:simsettings}.
Table~\ref{tab:LinPtilde} provides the power of the test, in terms of $\widetilde{p}$ (see Equation~\eqref{eq:ptilde}), for the different settings of $\Delta_1$ given that relationship between $\bs{X}(t)$ and $\bs{Y}(t)$ is linear (i.e., Path 1 in Figure~\ref{SimSignals}, top panel). We expect $\widetilde{p}$ to be high when $\Delta_1>0$ and low when $\Delta_1 = 0$.
Table~\ref{tab:LinPtilde} also shows the $p$-values for $\widetilde{p}$, i.e., $1 - \mathbb{F}(\widetilde{p})$. 
From Table~\ref{tab:LinPtilde}, we observe that the tests based on all three estimation methods yield low values of $\widetilde{p}$ when there is no difference between groups (i.e., $\Delta_1 = 0$). 
In Case 1, where $\mbf{W}(t)$ follows a Gaussian distribution, all tests show increasing power as $\Delta_1$ increases.
However, both the VCov and KenCoh methods consistently demonstrate higher power compared to MCD.
A similar pattern is seen in Case 2, where the added noise $W_j(t)$ follows a scaled Student's $t_3$ distribution. 
In Case 3, where white noise with a Cauchy distribution is added, KenCoh clearly outperforms both VCov and MCD. 
For example, under Case 3, both VCov and MCD fail to reject the null hypothesis for $\Delta_1 = 0.4$, whereas KenCoh successfully rejects for all $\Delta_1 \geq 0.4$, i.e., KenCoh is more sensitive even at smaller values of $\Delta_1$.
The MCD estimator performs poorly across all cases, which may be attributed to its limitation in handling sequential (longitudinal) data. Specifically, it minimizes the covariance determinant under the assumption of independent observations, which does not hold in longitudinal settings \citep{hubert2018minimum}.
In general, VCov performs well when the underlying distribution is elliptical with finite variance, and KenCoh yields competitive results across all cases. However, when the noise distribution lacks a second-order moment, such as when  ${W}_j(t) \sim 0.1t_1$ for all $j = P+1, \dots, D$, KenCoh significantly outperforms VCov. In such heavy-tailed settings, VCov tends to have a higher false negative rate and reduced power in detecting group differences.

\begin{table}[]
\caption{Summary of $p$-values for the null hypothesis that the canonical directions (weights of the channels) are equal for the two states at the Delta-band given that relationship between $\bs{X}(t)$ and $\bs{Y}(t)$ is \underline{non-linear}. Significant $p$-values are marked with $^{*}$ if it is below 10\%, $^{**}$ if it is below 5\%, and $^{***}$ if it is below 1\%.}
\label{tab:NonLinPtilde}
\resizebox{\textwidth}{!}{%
\begin{tabular}{@{}ccrrrrrrrrr@{}}
\toprule
 &
   &
  \multicolumn{3}{c}{\textbf{Case-1}} &
  \multicolumn{3}{c}{\textbf{Case-2}} &
  \multicolumn{3}{c}{\textbf{Case-3}} \\ \cmidrule(l){3-11} 
\multirow{-2}{*}{$\Delta_1$} &
  \multirow{-2}{*}{\textbf{Statistic}} &
  VCov &
  KenCoh &
  MCD &
  VCov &
  KenCoh &
  MCD &
  VCov &
  KenCoh &
  MCD \\ \midrule
 &
  {  $\widetilde{p}$} &
  {  109.73} &
  {  140.40} &
  128.11 &
  116.98 &
  148.31 &
  124.92 &
  133.87 &
  138.33 &
  165.20 \\ \cmidrule(l){2-11} 
\multirow{-2}{*}{\textbf{Zero}} &
  $1 - \mathbb{F}(\widetilde{p})$ &
  1.0 &
  0.9995 &
  1.0 &
  1.0 &
  0.9976 &
  1.0 &
  0.9999 &
  0.9997 &
  0.9655 \\ \midrule
 &
  $\widetilde{p}$ &
  166.93 &
  285.08 &
  135.90 &
  171.55 &
  311.69 &
  154.59 &
  231.96 &
  237.96 &
  147.17 \\ \cmidrule(l){2-11} 
\multirow{-2}{*}{\textbf{Non-zero}} &
  $1 - \mathbb{F}(\widetilde{p})$ &
  0.9574 &
  0.0001$^{**}$ &
  0.9998 &
  0.9282 &
  {  \textless{}0.0001$^{***}$} &
  0.9927 &
  0.0602$^{*}$ &
  0.0342$^{**}$ &
  0.9980 \\ \bottomrule 
\end{tabular}%
}
\end{table}

We further investigate the performance of the estimation methods beyond linearity. 
Table~\ref{tab:NonLinPtilde} summarize the power of the test at the Delta-band for the three settings, given $\bs{X}(t)$ and $\bs{Y}(t)$ have a non-linear relationship. 
Note that, for linear (relationship) case, the measure of effect size, $\Delta_1$, is straightforward to obtain since the eigenvalues and eigenvectors of true spectral density matrices can be obtained analytically (see Supplementary Material, Section A-3, for an example). However, the analytical expressions of eigenvalue and eigenvector in the non-linear case has no closed form. 
We then used two settings in the non-linear case (see Figure~\ref{SimSignals}) to allow for comparison of methods---(a) one setting has the same process for two states (i.e., zero change) and (b) another setting has different $\bs{E}^{(g)}$ for two states (i.e., non-zero change). 
From Table~\ref{tab:NonLinPtilde}, none of the estimation method rejects the null, when null is true. However, when $\Delta_1$ is non-zero, KenCoh consistently demonstrates the highest power, in fact, the other estimators did not detect the difference---i.e., only KenCoh detects significant differences across all cases.
These results indicate that KenCoh outperforms the other two methods, with its advantage becoming more pronounced when the relationships among variables are non-linear.


\section{Canonical Band-Coherence of Brain Signals}\label{chap:analysis}

We now investigate the functional brain connectivity structure of humans (in a driving experiment) and in mice (in an auditory experiment) using the novel KenCoh method and compare its performance against the baselines. In Section~\ref{subsec:analysis1}, we examine the levels of alertness and drowsiness through EEG recordings. In Section~\ref{subsec:analysis2}, we compared the neural activity captured through calcium imaging during in-sequence and oddball trials.

\subsection{Analysis of the Driving EEG Data}\label{subsec:analysis1}

The virtual driving experiment data in \cite{Cao2019DrivingData} consists of multiple trials per subject, with each trial lasting three seconds and the EEGs sampled at 128 Hz. Experts classified each trial as either an alert or drowsy state. 
We use the statistical tool proposed in this paper (KenCoh) to compare lagged cross-dependence between regions (i.e., $\bs{\Gamma}_{XY}(\Omega,\ell)$, for $\ell = 0, \pm1, \dots, \pm15$ in Equation~\eqref{cancor2}) between the alert ($g=1$) and drowsy ($g=0$) states, despite the heavy-tailed nature of the signals (see Figure~\ref{SamplePlot} showing Fp1 of an alert and a drowsy trial).  
\cite{Yantis2002FrontoPar} and \cite{liu2023BetaRightFronto} reveal that left fronto-parietal connectivity engaged in cognitive processes that require attention. Moreover, \cite{liu2023BetaRightFronto} mentioned that the right-frontal lobe is sensitive to distractions while driving and hence concentration is more evident in the left-hemisphere.
We, then, further subdivided the Fp and TPO regions (see Figure~\ref{framework}) into left and right hemispheres, resulting in four distinct groups of signals. These groups are as follows: (i) Left Fp region (LFp), including Fp1, F3, and F7; (ii) Right Fp region (RFp), including Fp2, F4, and F8; (iii) Left TPO region (LTPO), including T7, TP7, P3, and O1; and (iv) Right TPO region (RTPO), including T8, TP6, P4, and O2.

\begin{figure}[]
    \centering
    \includegraphics[width=0.75\textwidth]{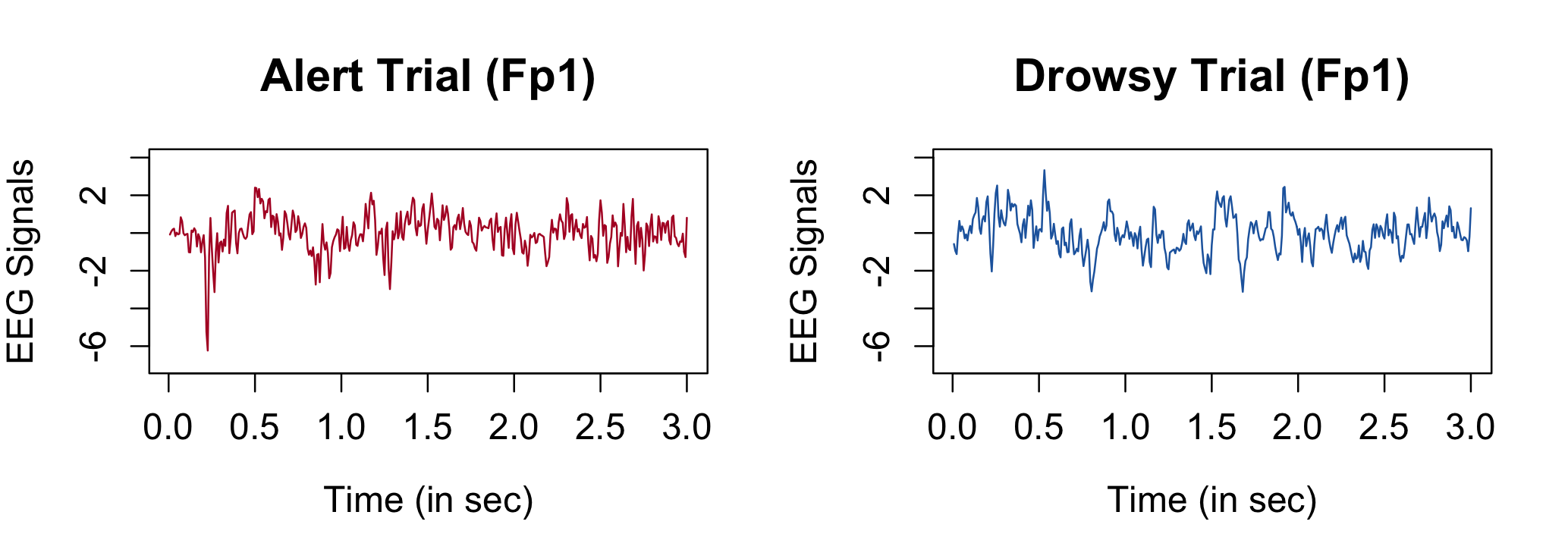}
    \caption{EEG signal at the Fp1 for one trial in alert state and drowsy state}
    \label{SamplePlot}
\end{figure}

We hypothesize that the brain network connectivity in alert and drowsy states are different and test the hypothesis using the method discussed in Section~\ref{subsec4:inference}.
We present here the results for the individual with the highest number of alert and drowsy trials. A complete analysis of all 11 individuals is provided in the Supplementary Material, Table A-4.2.
Table~\ref{tab:AdjPVals-notDir} summarizes the test results for the Delta frequency band (see Supplementary Material, Table A-4.1 for the results presented in greater details).
Adjustments for false discovery rate was done through \cite{benjamini2000adaptive} for multiple testing in different bands. Table~\ref{tab:AdjPVals-notDir} highlights that differences between alert and drowsy states are more pronounced in connectivity involving the left Fp lobe, particularly in the \textit{slow-wave band} (i.e., Delta-band). This significant difference was not detected using the other two methods. Recall that the simulation studies in Section~\ref{subsec:simresults}, KenCoh is more powerful when the data is heavy tailed or may have non-linear relationship among the variables. 

\begin{table}[]
\caption{The $p$-values at the Delta-band adjusted using the false discovery rate \citep{benjamini2000adaptive} with 5000 permutations. Significant $p$-values are marked with $^{**}$ if it is below 5\%.}
\label{tab:AdjPVals-notDir}
\resizebox{\textwidth}{!}{%
\begin{tabular}{@{}lcccccc@{}}
\toprule
\textbf{Connectivity}      & \textbf{LFp$\leftrightarrow$LTPO} & \textbf{LFp$\leftrightarrow$RTPO} & \textbf{LFp$\leftrightarrow$RFp} & \textbf{LTPO$\leftrightarrow$RTPO} & \textbf{LTPO$\leftrightarrow$RFp} & \textbf{RFp$\leftrightarrow$RTPO} \\ \midrule
{KenCoh $p$-values} & {0.0129$^{**}$}            & {0.0024$^{**}$}            & {0.0158$^{**}$}           & 0.7008               & 0.6713              & 0.4553              \\ 
VCov $p$-values            & 0.6132              & {0.0270$^{**}$}             & {0.0270$^{**}$}            & 0.4280                & 0.6132              & 0.4863              \\
MCD $p$-values             & 0.2184              & 0.8136              & 0.9200               & 0.9200                 & 0.9200                & 0.2184              \\ \bottomrule
\end{tabular}%
}
\end{table}
\begin{figure}
    \centering
    \includegraphics[width=0.85\textwidth]{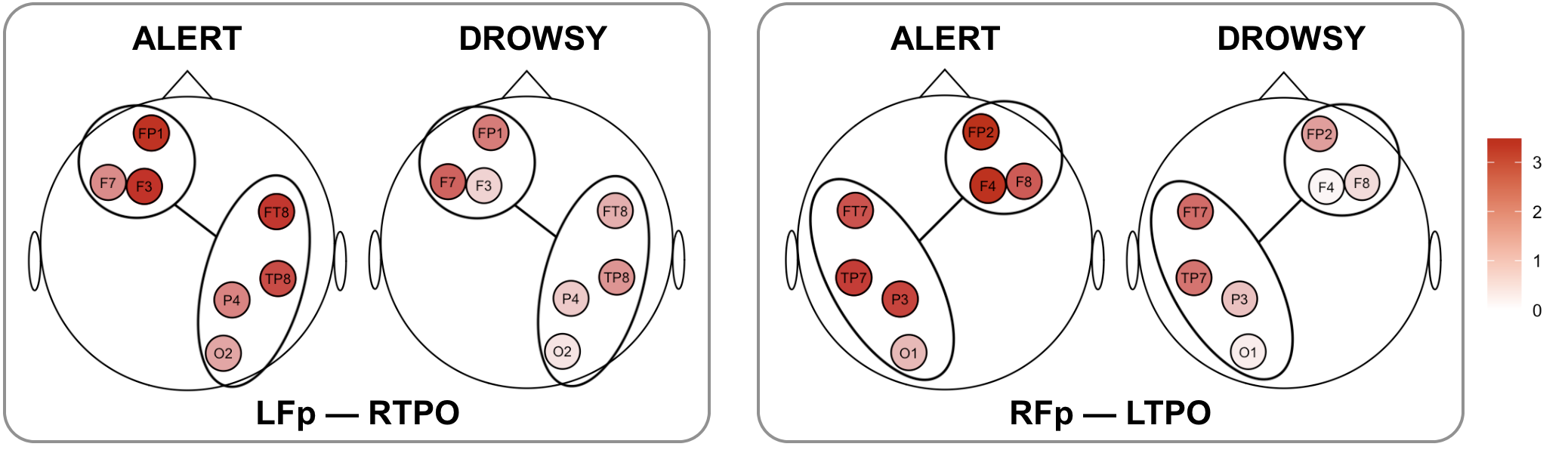}
    \caption{Mean estimates of absolute canonical directions via KenCoh in the Delta band for alert and drowsy states of subjects for LFp$\leftrightarrow$RTPO and RFp$\leftrightarrow$LTPO connectivities. 
    }
    \label{DeltaConnectivitySubj9}
\end{figure}
\begin{figure}
    \centering
    \includegraphics[width=0.85\textwidth]{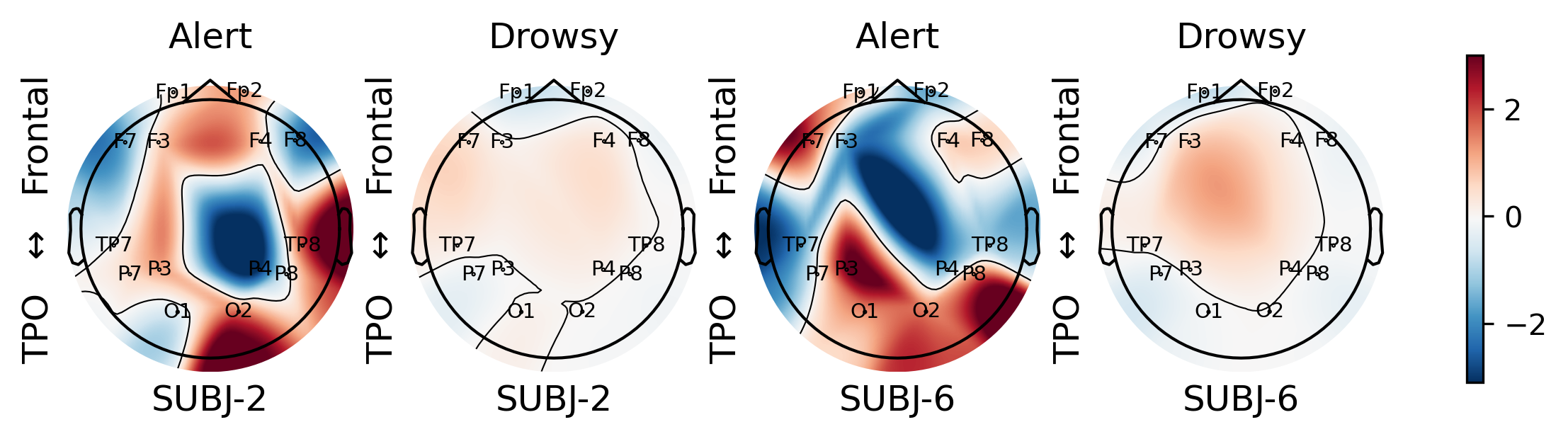}
    \caption{Mean estimates of canonical directions via KenCoh in the Delta band for alert and drowsy states of two other subjects with significant change in connectivity (of LFp$\leftrightarrow$RTPO and RFp$\leftrightarrow$LTPO). 
    }
    \label{DeltaConnectivityAllChannels}
\end{figure}

To further examine these differences at the Delta-band, Figure~\ref{DeltaConnectivitySubj9} shows the mean absolute canonical directions of 157 trials for alert and drowsy state using KenCoh method. Significant differences were found at LFp connectivities. Particularly, signals from the LFp region are associated differently with signals from the right TPO region (RTPO) in alert and drowsy states (see LFp $\leftrightarrow$ RTPO connectivity in Figure~\ref{DeltaConnectivitySubj9}). The right temporal-parietal channels (TP6 and T8)  \citep[both close to temporal lobe which is responsible for ``working memory for short term visual maintenance of information'', see][]{Patel2023Temporal}  and left frontal channel (F7) exhibit drastic changes in their weights when the state shifts. 
Specifically, all signals from the LFp region show drastic changes between the alert and drowsy states. In addition to this, Figure~\ref{DeltaConnectivityAllChannels} shows two other subjects with similar significant differences in connectivity between alert and drowsy states, specifically, the LFp $\leftrightarrow$ RTPO and RFp $\leftrightarrow$ LTPO connectivities at the Delta-band.
In summary, the frontal channel, which is the power-house in active thinking \citep{Yantis2002FrontoPar, liu2023BetaRightFronto}, has varying connectivity structure in the alert and drowsy states. The result indicates that association between frontal region (related to cognition) and parietal region (related to sense of perception) changes in the alert and drowsy state of an individual.

\subsection{Analysis of the Mouse Brain Calcium Signals} \label{subsec:analysis2}

\begin{figure}[]
    \centering
    \includegraphics[width= 0.9\textwidth]{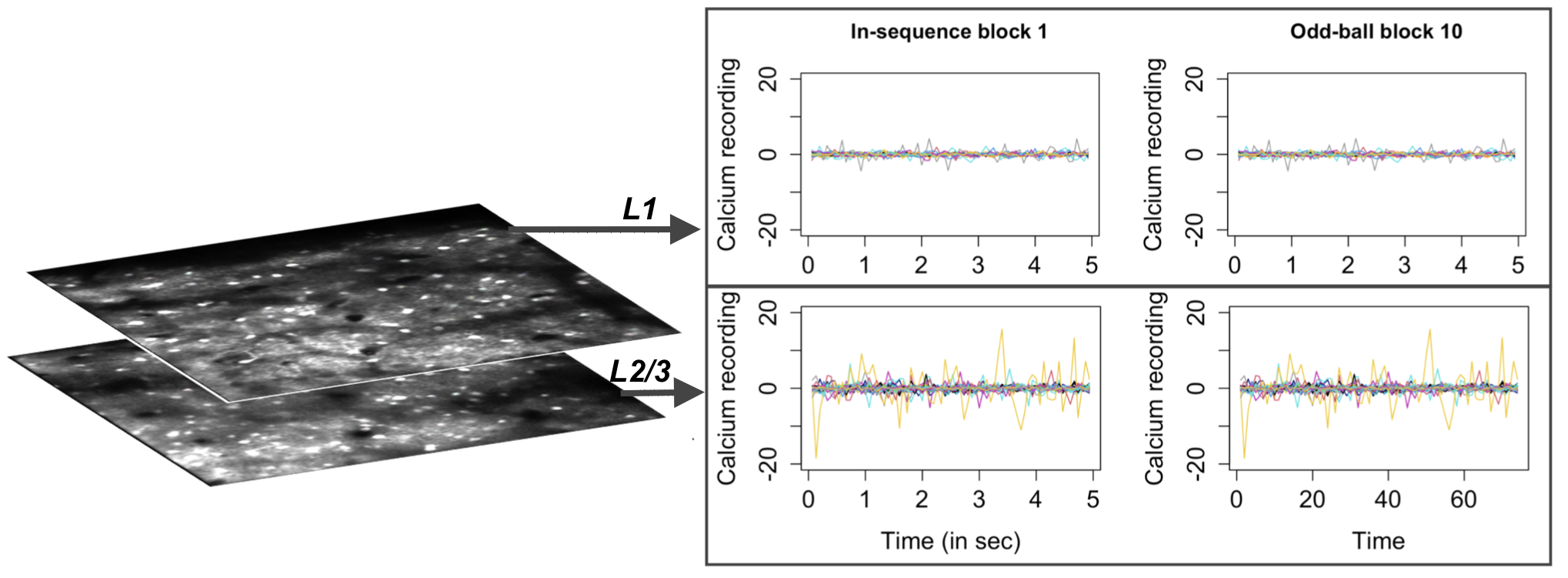}
    \caption{Calcium imaging data of the class of \textit{undefined-response neurons} in the first two in-sequence trial and one oddball sequence trial. Top panel shows L1 neurons and bottom panel shows the L2/3 neurons.}
    \label{SampleNeurons}
\end{figure} 
To showcase versatility of the method's application, we analyzed the mice calcium imaging data in \cite{reyesvallejo2023neurogliaform}. Calcium imaging enables monitoring of neuronal activity, in both inhibitory and excitatory neurons \citep{Grienberger2012imaging}. 
Our goal is to capture the cross-dependence between two neuronal populations: the first in layer 1 (L1) and the second in combined layers 2 and 3 (L2/3).
In our application, L1 is composed of 65 neurons and L2/3 is composed of 73 neurons.
The neurons were further classified into three sub-categories based on the homogeneity of neuronal response to repetitive stimuli presented. The three categories are (i) adapting response neurons (AN), or neurons that exhibit decreased activity in response to repetitive stimuli; (ii) increased-response neurons (IN), or neurons that show an increased response to repetitive stimuli; and (iii) undefined-response neurons (UN), or neurons with unclear or minimal change in activity in response to repetition. 
Table A-4.3 in the Supplementary Material shows the distribution of neurons based on its classification. 
Here, we provide a more in-depth analysis for the class of \underline{undefined-response neurons} (UN) to better understand how its activity changes in response to stimuli. Figure~\ref{SampleNeurons} shows sampled preprocessed signals from UN. The preprocessing includes centering the signals at zero and taking the first difference to eliminate the trend. 

Neuronal signals were sampled at 15 Hz and the duration of each trial was 5 seconds. Trials were categorized as either in-sequence or out-of-sequence---the latter also referred to as ``oddball" trials. 
The data for the first seven in-sequence trials (each replicated ten times) were analyzed,
resulting in a total of 70 in-sequence trials. In contrast, the oddball trials were replicated only nine times, and all were included in the analysis. 
A low-pass filter was applied to isolate low-frequency oscillations (i.e., below 4 Hz) and a high-pass filter was applied to obtain frequencies above 4 Hz, effectively separating low-frequency and high-frequency signals, respectively. 
The objective in this analysis is to estimate the lagged cross-dependence matrix at the low- and high-frequency signals through KenCoh method (see Section~\ref{subsec2:estimation}). 
\cite{Fu2014lowFreq} reveals that low-frequency oscillations in calcium imaging is associated with neuronal firings.
Hypothesis testing reveals that UNs---or neurons which have subtle change in neuronal firings---do not exhibit significant differences in slow-wave activity, at 5\% level of significance (see Table~\ref{tab:TestClassifNeurons}). However, their connectivity structure shows significant differences in high-frequency oscillations. In other words, our analysis revealed that in these UNs, the presentation of oddball in the experiment is related to the change in connectivity at higher-frequency oscillations.  
Although undefined-response neurons do not change their overall firing (slow, $<$4 Hz calcium fluctuations), oddball stimuli alter their inter-layer coordination seen as changes in the faster (4–7 Hz) oscillatory coupling, indicating a shift in timing/communication rather than rate. 

\begin{figure}
        \centering
        \includegraphics[width=1.0\textwidth]{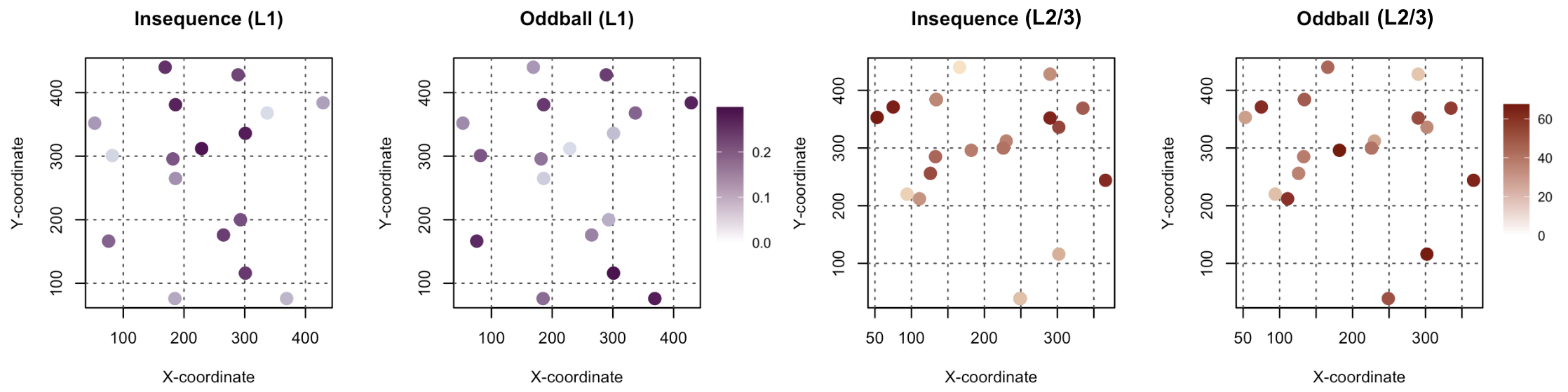}
        \caption{Average absolute canonical direction of UNs between L1 and L2/3, for oddball and {in-sequence} trials at \underline{high-frequency} band ({(4--7) Hz}).}
        \label{Layer1-23-HighFreq}
\end{figure} 

 \begin{table}
        \center
        \caption{Hypothesis testing results for detecting changes in connectivity structure between in-sequence trials and oddball trials for UNs. The $p$-values are adjusted using \cite{benjamini2000adaptive} for multiple testing. Significant $p$-values, after adjustment, are marked with $^{**}$ if it is below 5\%.}
        \label{tab:TestClassifNeurons}
        \begin{tabular}{@{}lrrr@{}}
        \toprule
        Frequency & VCov & KenCoh & MCD \\ \midrule
        Low-frequency $p$-value               & 0.3460 & 0.1502 & 0.2262   \\
        High-frequency $p$-value            & 0.6004 & 0.0016$^{**}$ & 0.2262   \\ \bottomrule
        \end{tabular}
        \end{table}
    
The KenCoh estimates of absolute mean canonical directions of UNs, filtered at high frequency, for in-sequence and oddball trials are shown in Figure~\ref{Layer1-23-HighFreq}. From this figure, observe that certain neurons in L1 and L2/3 are the primary drivers of association between these two layers during oddball, but not during in-sequence trials, and vice-versa.
KenCoh analysis reveals that communication between L1 and L2/3 is dependent on frequency; oddball stimuli drive certain neurons to strengthen their coordination in the faster (e.g., 4–7 Hz) band, even though their slow-wave activity stays the same. 
Theta-band inter-layer reconfiguration in UNs may serve as a latent signal of prediction error or novelty processing even when firing rates remain unchanged, offering a potential new biomarker for disorders of sensory integration. Future work should establish causality by perturbing top-down inputs to L1 and measuring the resulting effects on theta coupling, probe how neuromodulatory systems shape this coordination, and test whether the same signature is altered in disease models (e.g., ASD models such as Fmr1 KO). Translating this to humans could start with non-invasive measurements of theta coherence during oddball paradigms as early indicators of predictive-coding deficits, with the longer-term goal of using targeted neurostimulation or neurofeedback to restore or modulate maladaptive timing and coordination dynamics.

\section{Conclusion}\label{chap:conclusion}

Brain signals from various neurons or different channels exhibit natural groupings (as determined by the spatial configuration in both EEGs and calcium recordings).
Despite the importance of understanding the spectral dependence between such groups of time-series, there is a scarcity of studies in the literature addressing this topic of modeling dependence in the oscillatory activity between two groups of channels (or layers of neurons). One of the few available approaches is the canonical coherence analysis defined formally in \cite{brillinger2001time} \citep[which was inspired by the seminal work of canonical correlation analysis in][]{Hotelling1992CCA}. This method, however, is limited to singleton-frequency and has not been explored for heavy-tailed signals. One of the novel contributions of our paper is the definition of CBC (see Definition~\ref{definition:CBC}) which is applicable to frequency bands. Another novel contribution presented in this paper is the utilization of Kendall's tau for CBC estimation which leads to the advantage of robustness to outliers, and to a hypothesis testing framework that is sensitive to functional brain networks differences across brain states and experimental conditions.
Our proposed method, KenCoh, takes advantage of the spectral-dependence matrix,
and provides an interpretable and robust measure of channel weights (i.e., $\mbf{a}_\Omega$ and $\mbf{b}_\Omega$) that forms the connectivity structure between groups of signals. 

A simulation study was conducted to thoroughly examine the ability of the permutation test defined in Section~\ref{subsec4:inference} to detect changes in the connectivity structure via different estimation methods detailed in Section~\ref{subsec2:estimation}. These numerical experiments demonstrate that KenCoh is at par with the moment-based estimator of the covariance matrix under the assumption of Gaussianity. 
Another distinct advantage of KenCoh, over the classical measures, is its ability to capture non-linear dependence, even in the presence of outliers.
Moreover, if the underlying distributions are heavy-tailed, KenCoh outperforms the moment-based estimator and MCD, a robust estimator of covariance. 

The analysis of EEG data using the proposed KenCoh method uncovered novel findings in the analysis of brain signals. One of these is the stronger contribution of LFp- and parietal-channels during alert state. This provides another empirical evidence that supports the neuroscience literature where this unique connection is present when an individual is doing a physical activity \citep[]{liu2023BetaRightFronto,Yantis2002FrontoPar}.
Moreover, we detected unique neuronal behavior in calcium imaging of mice when studying the association between L1 and L2/3 neurons. Specifically, neurons have distinct characteristics for different frequency bands, such as the connectivity observed for class of UNs---i.e., some neurons have stronger contributions to $\kappa_\Omega$ during oddball trial when higher frequency oscillations are isolated. 

There are some limitations in the current approach that will be addressed in future studies. Here, trials are assumed to be uncorrelated (though the neurons are dependent within each trial). Our future work will involve modeling some temporal dependence across trials. This is not straightforward as this would require developing a framework that extends the evolving evolutionary processes by \cite{fiecas2016modeling}. In addition, for multi-subject data, our future work will be to rigorously develop some ``mixed effects framework'' following the mixed-effects vector autoregressive model (ME-VAR) in \cite{gorrostieta2012investigating}. In this set-up, we will define some ``fixed effects'' to denote the specific functional brain networks for each brain state and for each experimental conditions. We will also define ``random effects'' to capture variation between trials (within a subject) and the variation across participants in the study. These are all important directions that we anticipate to be impactful in the study of physiological functional brain processes.

\begin{appendix}
\section{Supplementary Material} \label{appn1} 
This supplementary document provides additional details to support the methodology and results presented in the main manuscript. 

\section{Proof of Propositions} \label{appn1} 

\subsection{Proof of Proposition 3.1}
\begin{duplicate}
    Let $c_{\Omega}(s), \ s = 0, \pm 1, \pm 2, \ldots$, be a linear filter, such that, ${\sum_{s = -\infty}^{\infty} |c_{\Omega}(s)| < \infty}$. Moreover, let ${\{\mbf{Z}(t)\}_{t = 1}^T}$ be a weakly stationary $D$-dimensional time-series as defined in Definition 3.1. Let ${\mbf Z}^\Omega(t)$ be the convolution of $\mbf{Z}(t)$ and the filter $c_{\Omega}(h)$, for $t = 1, \dots, T$. We define $\bs{\Gamma}(\Omega, \ell)$ as the lagged dependence matrix of ${\mbf Z}^\Omega(t)$ as defined in Definition 3.1, for ${\ell = 0, \pm 1, \dots, \pm L}$. If second-order moments of $\{\bs{Z}(t)\}_{t = 1}^T$ exists, then canonical band-coherence, $\kappa(\Omega)$, is equivalent to,
\begin{gather*}
    \kappa(\Omega) = \max_{ \{\mbf{a}_\Omega , \mbf{b}_\Omega, \ell \}} \left| \mbf{a}^{\top}_\Omega \left[ \int_{\{\omega : \Omega/S\}} \mbf{f}_{XY}(\omega) e^{\rmi 2\pi\omega \ell} \rmd \omega \right] \mbf{b}_\Omega \right|^2 \;\;  \\
    \text{such that, } {\mbf a}_\Omega^\top {\mbf f}_{XX}(\Omega) {\mbf a}_\Omega = {\mbf b}_\Omega^\top {\mbf f}_{YY}(\Omega)  {\mbf b}_\Omega = 1, 
\end{gather*}
where $\mbf{a}_{\Omega} \in \mathbb{R}^P$ and $\mbf{b}_{\Omega} \in \mathbb{R}^Q$ are the canonical directions.
\end{duplicate}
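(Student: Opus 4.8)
The plan is to start from the correlation-based definition of $\kappa(\Omega)$ in Equation~\eqref{cancor2}, rewrite each block of $\bs{\Gamma}_Z(\Omega,\cdot)$ first as a (normalized) covariance matrix of the filtered series and then in terms of the spectral density matrix $\mbf{f}_{ZZ}(\omega)$, and finally absorb all the normalizing constants into a linear change of variables on the canonical directions. Throughout I use that the hypothesis (existence of second-order moments) makes every covariance and spectrum well defined, and I write $\bs{a},\bs{b}$ for the maximizers appearing in Equation~\eqref{cancor2}, keeping $\mbf{a}_\Omega,\mbf{b}_\Omega$ for the (differently normalized) directions in Equation~\eqref{cancor-theorem}.

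The first ingredient is the classical identity for linearly filtered weakly stationary series: since $Z_j^\Omega(t)=\sum_s c_\Omega(s)Z_j(t-s)$ with $\sum_s|c_\Omega(s)|<\infty$, the filtered cross-covariance at lag $\ell$ is
\[
\Cov\!\bigl(Z_j^\Omega(t),Z_k^\Omega(t+\ell)\bigr)=\int_{-1/2}^{1/2}\bigl|C^{(\Omega)}(\omega)\bigr|^2 f_{jk}(\omega)\,e^{\rmi 2\pi\omega\ell}\,\rmd\omega,
\]
with $C^{(\Omega)}$ the transfer function of the filter. Inserting the defining property~\eqref{linearfilter} of $c_\Omega$, the weight $|C^{(\Omega)}(\omega)|^2$ is a constant on the passband and zero elsewhere, so this collapses to a constant multiple of $\int \mbf{f}_{XY}(\omega)e^{\rmi 2\pi\omega\ell}\rmd\omega$ over the band for the cross-block and to constant multiples of $\mbf{f}_{XX}(\Omega)$ and $\mbf{f}_{YY}(\Omega)$ for the lag-$0$ auto-blocks. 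Writing $\mbf{D}_X,\mbf{D}_Y$ for the diagonal matrices of standard deviations of the filtered components, I then have $\bs{\Gamma}_{XY}(\Omega,\ell)=\mbf{D}_X^{-1}\mbf{C}_{XY}(\ell)\mbf{D}_Y^{-1}$, $\bs{\Gamma}_{XX}(\Omega,0)=\mbf{D}_X^{-1}\mbf{C}_{XX}(0)\mbf{D}_X^{-1}$ and $\bs{\Gamma}_{YY}(\Omega,0)=\mbf{D}_Y^{-1}\mbf{C}_{YY}(0)\mbf{D}_Y^{-1}$, where the $\mbf{C}$'s are the filtered covariance blocks just obtained.

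Substituting these into Equation~\eqref{cancor2} and reparametrizing by $\mbf{a}_\Omega:=c\,\mbf{D}_X^{-1}\bs{a}$ and $\mbf{b}_\Omega:=c\,\mbf{D}_Y^{-1}\bs{b}$ for the positive constant $c$ dictated by $\delta=\omega_2-\omega_1$, the objective $\{\bs{a}^{\top}\bs{\Gamma}_{XY}(\Omega,\ell)\bs{b}\}^2$ turns into $\bigl|\mbf{a}_\Omega^{\top}[\int \mbf{f}_{XY}(\omega)e^{\rmi 2\pi\omega\ell}\rmd\omega]\mbf{b}_\Omega\bigr|^2$ over the band, and the constraints $\bs{a}^{\top}\bs{\Gamma}_{XX}(\Omega,0)\bs{a}=\bs{b}^{\top}\bs{\Gamma}_{YY}(\Omega,0)\bs{b}=1$ turn into $\mbf{a}_\Omega^{\top}\mbf{f}_{XX}(\Omega)\mbf{a}_\Omega=\mbf{b}_\Omega^{\top}\mbf{f}_{YY}(\Omega)\mbf{b}_\Omega=1$, with the normalizing constants from the numerator and from the two constraints cancelling. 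Since $\bs{a}\mapsto\mbf{a}_\Omega$ and $\bs{b}\mapsto\mbf{b}_\Omega$ are bijections of $\mathbb{R}^P$ and $\mathbb{R}^Q$ — here I need the filtered components to have strictly positive variance so that $\mbf{D}_X,\mbf{D}_Y$ are invertible, and $\mbf{f}_{XX}(\Omega),\mbf{f}_{YY}(\Omega)$ positive definite so the two constraint sets map onto one another — taking the supremum over $\{\bs{a},\bs{b},\ell\}$ on one side and over $\{\mbf{a}_\Omega,\mbf{b}_\Omega,\ell\}$ on the other yields the same value, which is Equation~\eqref{cancor-theorem}.

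The step I expect to require the most care is the bookkeeping around the transfer function and the reconciliation of the two squares. Because $c_\Omega$ is real, $|C^{(\Omega)}(\omega)|^2$ is even in $\omega$, so the effective passband is symmetric about the origin; combined with the Hermitian symmetry $f_{jk}(-\omega)=f_{jk}^{*}(\omega)$, this shows that the band-restricted integral appearing in the statement is in fact real, which is what makes the squared modulus in Equation~\eqref{cancor-theorem} consistent with the real square in Equation~\eqref{cancor2} and simultaneously pins down the constant $c$ (the $\delta$ together with a factor $2$ coming from folding the negative-frequency half onto the positive band). In particular the symbol $\{\omega:\Omega/S\}$ in the statement must be read as the full, symmetric support of the filter, and $\mbf{f}_{XX}(\Omega),\mbf{f}_{YY}(\Omega)$ as integrals over that same set. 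Everything else — the spectral representation of a filtered covariance, the diagonal standardization, and the linear change of variables — is routine once second moments are assumed.
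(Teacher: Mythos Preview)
Your proposal is correct and follows essentially the same route as the paper's own proof: rewrite $\kappa(\Omega)$ as a Rayleigh-type ratio, pass from $\bs{\Gamma}$ to filtered covariances, invoke the spectral representation of a linearly filtered covariance together with the passband property of $|C^{(\Omega)}(\omega)|^2$, and let the normalizing constants cancel between numerator and constraints. The one substantive difference is the mechanism for converting $\bs{\Gamma}$ to covariance: the paper exploits the elliptic-distribution identity $\bs{\Gamma}_{XY}(\Omega,\ell)=K\,\bs{\Sigma}_{XY,\Omega}(\ell)$ with the scalar $K=-2\psi'(0)$ (so the same $K$ appears in every block and cancels immediately in the ratio), whereas you use the generic diagonal standardization $\bs{\Gamma}=\mbf{D}^{-1}\mbf{C}\,\mbf{D}^{-1}$ and absorb the diagonals into a change of variables $\mbf{a}_\Omega=c\,\mbf{D}_X^{-1}\bs{a}$; your version is arguably more transparent and does not lean on the elliptic assumption beyond finite second moments, and your added discussion of the symmetric passband and Hermitian symmetry (ensuring the band integral is real) fills a gap the paper leaves implicit.
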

\begin{proof}
    Let $\bs{\Sigma}_{XY,\Omega}(\ell)$ be the covariance matrix of filtered series $\mbf{X}^{\Omega}(t)$ and $\mbf{Y}^{\Omega}(t)$ at band-$\Omega$ and lag-$\ell$, for $\ell = 0, 1, \dots, T-1$. Moreover, let $\bs{\Sigma}_{XX,\Omega}$ and $\bs{\Sigma}_{YY,\Omega}$ be the covariance matrix of filtered series $\mbf{X}^{\Omega}(t)$ and ${\mbf Y}^{\Omega}(t)$ at $\ell = 0$, respectively. We have ${\bs{\Gamma}_{XY}(\Omega, \ell) = K \bs{\Sigma}_{XY,\Omega}(\ell)}$, where $K = -2\psi'(0)$ is constant. Correspondingly, we define ${\bs{\Gamma}_{XX}(\Omega, \ell) = K \bs{\Sigma}_{XX,\Omega}(\ell)}$ and $\bs{\Gamma}_{YY}(\Omega, \ell) = K \bs{\Sigma}_{YY,\Omega}(\ell)$. 
    Given ${{\mbf a}^\top_\Omega \bs{\Gamma}_{XX}(\Omega, 0) {\mbf a}_\Omega = {\mbf b}_\Omega^\top \bs{\Gamma}_{YY}(\Omega, 0) {\mbf b}_\Omega = 1}$, then, 
    \begin{eqnarray*}
        \kappa_{XY}^{1/2} (\Omega) & = & \max_{  \{{\mbf a}_{\Omega}, {\mbf b}_{\Omega}; \ell \}} {\mbf a}^{\top}_\Omega  \bs{\Gamma}_{XY}(\Omega,\ell) {\mbf b}_{\Omega} \\
        & = & \max_{\{{\mbf a}_\Omega, {\mbf b}_\Omega; \ell \}} \frac{ {\mbf a}_\Omega^\top \bs{\Gamma}_{XY}(\Omega, \ell) {\mbf b}_\Omega }{ \sqrt{{\mbf a}^\top_\Omega \bs{\Gamma}_{XX}(\Omega, 0) {\mbf a}_\Omega {\mbf b}^\top_\Omega \bs{\Gamma}_{YY}(\Omega, 0) {\mbf b}_\Omega} } \\
        & = & \max_{\{{\mbf a}_\Omega, {\mbf b}_\Omega; \ell \}} \frac{ {\mbf a}_\Omega^\top \bs{\Sigma}_{XY,\Omega}(\ell) {\mbf b}_\Omega }{ \sqrt{{\mbf a}^\top_\Omega \bs{\Sigma}_{XX,\Omega}(0) {\mbf a}_\Omega {\mbf b}^\top_\Omega \bs{\Sigma}_{YY,\Omega}(0) {\mbf b}_\Omega} }
    \end{eqnarray*}
    
    Given a filtered series  $X^{\Omega}(t)  = \sum_{s = -\infty}^\infty c_{\Omega}(s) X(t-s)$, where $c_{\Omega}(h^*)$ is a linear filter that has zero spectrum outside a defined band, e.g., 
    $$|C^{(\Omega)}(\omega)|^2 = |\sum_{ s} c_{\Omega}(s) e^{-\rmi 2\pi\omega s}|^2 = 1/2\delta,$$ 
    for $\omega \in \Omega/S$ and 0 otherwise \citep{Ombao2008evolutionary}. Then $f_{X}(\Omega) =  |C^{(\Omega)}(\omega)|^2 f_{X}(\omega)$ \cite{lindgren2013stationary} and $\text{Cov}(X(t-\ell),X(t)) = \int_{-0.5}^{0.5} f_{X}(\omega) e^{\rmi 2\pi\omega \ell} \rmd \omega$. Similarly, it can be shown that $f_{XY}(\Omega) =  |C^{(\Omega)}(\omega)|^2 f_{XY}(\omega)$ and $\text{Cov}(X(t-\ell),Y(t)) = \int_{-0.5}^{0.5} f_{XY}(\omega) e^{\rmi 2\pi\omega \ell} \rmd \omega$. Hence,
    \begin{eqnarray*}
        \kappa_{XY}^{1/2} (\Omega) & = & \max_{  \{{\mbf a}, {\mbf b}; \ell \}} \frac{ {\mbf a}^\top \int_{-0.5}^{0.5}|C^{(\Omega)}(\omega)|^2 {\mbf f}_{XY}(\omega) e^{\rmi 2\pi\omega \ell} \rmd\omega {\mbf b} }{ \sqrt{{\mbf a}^\top \int_{-0.5}^{0.5}|C^{(\Omega)}(\omega)|^2 {\mbf f}_{XX}(\omega) \rmd\omega {\mbf a} {\mbf b}^\top \int_{-0.5}^{0.5}|C^{(\Omega)}(\omega)|^2 {\mbf f}_{YY}(\omega) \rmd\omega {\mbf b}} } \\
        & = & \max_{  \{{\mbf a}, {\mbf b}; \ell \}} \frac{ {\mbf a}^\top \int_{\Omega} {\mbf f}_{XY}(\omega) e^{\rmi 2\pi\omega \ell} \rmd\omega {\mbf b} }{ \sqrt{{\mbf a}^\top {\mbf f}_{XX}(\Omega) {\mbf a} {\mbf b}^\top {\mbf f}_{YY}(\Omega)  {\mbf b}} }
    \end{eqnarray*}
Given ${\mbf a}^\top {\mbf f}_{XX}(\Omega) {\mbf a} = {\mbf b}^\top {\mbf f}_{YY}(\Omega)  {\mbf b} = 1$, we have 
   \[ \kappa(\Omega) = \max_{ \{\mbf{a}_\Omega , \mbf{b}_\Omega, \ell \}} \left| \mbf{a}^{\top}_\Omega \left[ \int_{\{\omega : \Omega/S\}} \mbf{f}_{XY}(\omega) e^{\rmi 2\pi\omega \ell} \rmd \omega \right] \mbf{b}_\Omega \right|^2. \]
    
\end{proof}

\subsection{Proof of Proposition 3.2}

\begin{duplicate}
   Consider the weakly stationary time series filtered at frequency band $\Omega$, ${(\bs{X}^{\Omega\top}(t), \bs{Y}^{\Omega\top}(t))^\top \in \mathbb{R}^{P+Q}}$, and its lagged cross-dependence matrix $\bs{\Gamma}(\Omega, \ell)$ as in Definition 3.1, for $\ell = 0, \pm 1, \dots, \pm L$.
Let ${\Lambda}_j(\Omega, \ell)$ be the $j$-th largest eigenvalue of the following:
    \begin{gather}
    \bs{\Gamma}_{XX}^{-1/2}(\Omega, 0)\bs{\Gamma}_{XY}(\Omega, \ell)\bs{\Gamma}_{YY}^{-1}(\Omega, 0)\bs{\Gamma}_{YX}(\Omega, -\ell)\bs{\Gamma}_{XX}^{-1/2}(\Omega, 0), \; \;  \label{Pxy} \\
    \bs{\Gamma}_{YY}^{-1/2}(\Omega, 0)\bs{\Gamma}_{YX}(\Omega, -\ell)\bs{\Gamma}_{XX}^{-1}(\Omega, 0)\bs{\Gamma}_{XY}(\Omega, \ell)\bs{\Gamma}_{YY}^{-1/2}(\Omega, 0). \label{Pyx} 
\end{gather}
for $j = 1, \dots, \min(P,Q)$. Moreover, for non-zero $\Lambda_{1}(\Omega, \ell)$, we denote the corresponding eigenvectors for Equations~\eqref{Pxy} and \eqref{Pyx} at lag $\ell$, as $\mbf{u}_{\Omega} \in \mathbb{R}^P$ and $\mbf{v}_{\Omega} \in \mathbb{R}^Q$, respectively.
Then, $\kappa_\Omega = \max_{\ell} \{{\Lambda}_{1}(\Omega, \ell)\}$,
    \begin{equation}
       \mbf{a}_{\Omega} := \bs{\Gamma}_{XX}(\Omega, 0)^{-1/2}\mbf{u}_\Omega \ \text{and } \  \mbf{b}_{\Omega} := \bs{\Gamma}_{YY}(\Omega, 0)^{-1/2}\mbf{v}_\Omega. \label{canvec}
    \end{equation}
\end{duplicate}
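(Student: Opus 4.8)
The plan is to reduce the constrained optimization defining $\kappa(\Omega)$ in Definition~\ref{definition:CBC} (equivalently, its frequency-domain form in Proposition~\ref{Theorem:definition}) to a singular-value problem by a whitening change of variables, and then to recognize the matrices in \eqref{Pxy}--\eqref{Pyx} as $\bs{M}\bs{M}^\top$ and $\bs{M}^\top\bs{M}$ for an appropriate matrix $\bs{M}$. Throughout I treat the lag $\ell$ as fixed at first and optimize over $\bs{a}_\Omega,\bs{b}_\Omega$ only, deferring the (trivial) maximization over the finite set $\ell\in\{0,\pm1,\dots,\pm L\}$ to the very end.

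First I would note that $\bs{\Gamma}_{XX}(\Omega,0)$ and $\bs{\Gamma}_{YY}(\Omega,0)$ are symmetric positive definite — being lag-zero correlation matrices of the (assumed non-degenerate) filtered series — so their symmetric square roots $\bs{\Gamma}_{XX}^{1/2}(\Omega,0)$ and $\bs{\Gamma}_{YY}^{1/2}(\Omega,0)$ exist and are invertible. I introduce $\bs{u} = \bs{\Gamma}_{XX}^{1/2}(\Omega,0)\bs{a}_\Omega$ and $\bs{v} = \bs{\Gamma}_{YY}^{1/2}(\Omega,0)\bs{b}_\Omega$, so the normalization constraints become $\bs{u}^\top\bs{u} = \bs{v}^\top\bs{v} = 1$, and I set
\[
\bs{M}(\ell) := \bs{\Gamma}_{XX}^{-1/2}(\Omega,0)\,\bs{\Gamma}_{XY}(\Omega,\ell)\,\bs{\Gamma}_{YY}^{-1/2}(\Omega,0).
\]
Then $\{\bs{a}_\Omega^\top\bs{\Gamma}_{XY}(\Omega,\ell)\bs{b}_\Omega\}^2 = (\bs{u}^\top\bs{M}(\ell)\bs{v})^2$, so the inner problem is $\max\{(\bs{u}^\top\bs{M}(\ell)\bs{v})^2 : \|\bs{u}\| = \|\bs{v}\| = 1\}$.

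Next I would invoke the variational characterization of the largest singular value: maximizing over $\bs{v}$ with $\bs{u}$ fixed (Cauchy--Schwarz) gives $\|\bs{M}(\ell)^\top\bs{u}\|$, and then maximizing over $\bs{u}$ gives $\sigma_{\max}(\bs{M}(\ell)) = \sqrt{\lambda_{\max}(\bs{M}(\ell)\bs{M}(\ell)^\top)}$, attained at the leading left/right singular vector pair $(\bs{u}_\Omega,\bs{v}_\Omega)$ of $\bs{M}(\ell)$ (with $\bs{v}_\Omega$ chosen as $\bs{M}(\ell)^\top\bs{u}_\Omega$ renormalized, so that the cross term is genuinely attained). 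Using $\bs{\Gamma}_{YX}(\Omega,-\ell) = \bs{\Gamma}_{XY}(\Omega,\ell)^\top$ from Remark~1 and the symmetry of the square roots, direct multiplication shows
\[
\bs{M}(\ell)\bs{M}(\ell)^\top = \bs{\Gamma}_{XX}^{-1/2}(\Omega,0)\bs{\Gamma}_{XY}(\Omega,\ell)\bs{\Gamma}_{YY}^{-1}(\Omega,0)\bs{\Gamma}_{YX}(\Omega,-\ell)\bs{\Gamma}_{XX}^{-1/2}(\Omega,0),
\]
which is exactly the matrix in \eqref{Pxy}; likewise $\bs{M}(\ell)^\top\bs{M}(\ell)$ is the matrix in \eqref{Pyx}. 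Since these two matrices share the same nonzero eigenvalues (the squared singular values of $\bs{M}(\ell)$), $\Lambda_j(\Omega,\ell)$ is well defined for $j = 1,\dots,\min(P,Q)$, their leading eigenvectors are precisely $\bs{u}_\Omega$ and $\bs{v}_\Omega$, and the value of the inner problem is $\Lambda_1(\Omega,\ell)$. Undoing the substitution gives $\bs{a}_\Omega = \bs{\Gamma}_{XX}^{-1/2}(\Omega,0)\bs{u}_\Omega$ and $\bs{b}_\Omega = \bs{\Gamma}_{YY}^{-1/2}(\Omega,0)\bs{v}_\Omega$ (as in \eqref{canvec}), which one checks satisfy the constraints since $\bs{a}_\Omega^\top\bs{\Gamma}_{XX}(\Omega,0)\bs{a}_\Omega = \bs{u}_\Omega^\top\bs{u}_\Omega = 1$, and similarly for $\bs{b}_\Omega$. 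Maximizing over $\ell$ then yields $\kappa(\Omega) = \max_\ell\Lambda_1(\Omega,\ell)$.

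The main obstacle will be the bookkeeping in the variational step and the sign alignment of the two eigenvectors: one must take $\bs{u}_\Omega$ and $\bs{v}_\Omega$ as the left and right singular vectors of the \emph{same} $\bs{M}(\ell)$ rather than as independently chosen leading eigenvectors of $\bs{M}\bs{M}^\top$ and $\bs{M}^\top\bs{M}$, since otherwise $\bs{u}_\Omega^\top\bs{M}(\ell)\bs{v}_\Omega$ need not equal $\sigma_{\max}(\bs{M}(\ell))$. Alongside this, one needs to justify positive-definiteness (hence invertibility of the square roots) of $\bs{\Gamma}_{XX}(\Omega,0)$ and $\bs{\Gamma}_{YY}(\Omega,0)$ under the standing assumptions, and the standard facts that $\lambda_{\max}(\bs{M}\bs{M}^\top) = \lambda_{\max}(\bs{M}^\top\bs{M})$ and that the two products share all nonzero eigenvalues with multiplicity. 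These are routine but are exactly where care is required.
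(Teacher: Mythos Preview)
Your proposal is correct and follows essentially the same approach as the paper: both introduce the whitened matrix $\bs{M}(\ell)=\bs{\Gamma}_{XX}^{-1/2}(\Omega,0)\bs{\Gamma}_{XY}(\Omega,\ell)\bs{\Gamma}_{YY}^{-1/2}(\Omega,0)$ (the paper writes it $\bs{\Theta}(\ell)$, citing Mardia's Theorem~11.2.1), reduce the constrained problem to $\max_{\|\bs{u}\|=\|\bs{v}\|=1}(\bs{u}^\top\bs{M}(\ell)\bs{v})^2$, identify the optimum as the top eigenvalue of $\bs{M}\bs{M}^\top$ and $\bs{M}^\top\bs{M}$, and then undo the whitening. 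The only cosmetic difference is that the paper handles the inner variational step via the rank-one eigendecomposition of $\bs{\Theta}(\ell)\bs{v}\bs{v}^\top\bs{\Theta}(\ell)^\top$ rather than the Cauchy--Schwarz/SVD argument you use.
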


\begin{proof}
    The proof follows the argument similar to Theorem 11.2.1 in \cite{Mardia1979bibby}. Let ${\bs{\Theta}}(\ell) = \{\bs{{\Gamma}}_{{XX}}(\Omega,0)\}^{-\frac{1}{2}}\bs{{\Gamma}}_{{XY}}(\Omega,\ell)\{\bs{{\Gamma}}_{{YY}}(\Omega,0)\}^{-\frac{1}{2}}$. 
    From here, we drop the $\Omega$ for brevity.
    We then have
    \[\max\limits_{\bs u, v, \ell}\left [\bs{u}^\top\bs{\Theta}(\ell)\bs{v}\right ]^2 = \max\limits_{\bs{u}, \bs{v}, \ell}\ \ \bs{u}^\top {\bs{\Theta}(\ell)\bs{v}\bs{v}^\top\bs{\Theta}^\top(\ell)}\bs{u}.\]
    For fixed $\bs{v}$, we define the following decomposition: $${\bs{\Theta}(\ell)\bs{v}\bs{v}^\top\bs{\Theta}^\top(\ell)} = \bs{D}\bs{\Lambda}^{(\bs{v})}\bs{D}^\top.$$
     Let $\bs{\gamma}^{(\bs{v})} = \bs{D}^\top\bs{u}$, where $\bs{\gamma}^{\top(\bs{v})}\bs{\gamma}^{(\bs{v})} = 1$. Then,
        \[\max\limits_{\bs u, v}\left [\bs{u}^\top\bs{\Theta}(\ell)\bs{v} \right ]^2 =  \bs{\gamma}^{\top(\bs{v})}\bs{\Lambda}^{(\bs{v})}\bs{\gamma}^{(\bs{v})} \equiv \sum_{j=1}^P\Lambda^{(\bs{v})}_{j}\{\gamma^{(\bs{v})}_{j}\}^2.\]
        
    Note that ${\bs{\Theta}(\ell)\bs{v}\bs{v}^\top\bs{\Theta}^\top(\ell)}$ has same non-zero eigen-values with that of $\bs{v}^\top\bs{\Theta}^\top(\ell)\bs{\Theta}(\ell)\bs{v}$ which itself is scalar, and hence is its own eigenvalue \citep{Mardia1979bibby}. 
        Moreover, we have $\sum_j\Lambda_j^{(\bs{v})}\{\gamma^{(\bs{v})}_{j}\}^2 \leq \Lambda_1^{(\bs{v})}\sum_j \{\gamma^{(\bs{v})}_{j}\}^2$.
       In other words, $\Lambda_1^{(\bs{v})}=\bs{v}^\top\bs{\Theta}^\top(\ell)\bs{\Theta}(\ell)\bs{v}$.
        Hence the solution to $\bm{u}_\Omega$ is the leading eigenvector of ${\bs{\Theta}(\ell)\bs{v}\bs{v}^\top\bs{\Theta}^\top(\ell)}$.

        Using the same principle, we obtain the solution for $\bm{v}_\Omega$ through maximizing ${\bs{v}^\top\bs{\Theta}(\ell)^\top\bs{\Theta}(\ell)\bs{v}}$
       In summary, ${\kappa}(\Omega)$ is the largest eigenvalue of 
        \begin{gather*}
            \bs{\Theta}^\top(\ell)\bs{\Theta}(\ell) = \bs{\Gamma}_{YY}^{-1/2}(\Omega, 0)\bs{\Gamma}_{YX}(\Omega, -\ell)\bs{\Gamma}_{XX}^{-1}(\Omega, 0)\bs{\Gamma}_{XY}(\Omega, \ell)\bs{\Gamma}_{YY}^{-1/2}(\Omega, 0) \; \text{and} \\ 
            \bs{\Theta}(\ell)\bs{\Theta}^\top(\ell) = \bs{\Gamma}_{XX}^{-1/2}(\Omega, 0)\bs{\Gamma}_{XY}(\Omega, \ell)\bs{\Gamma}_{YY}^{-1}(\Omega, 0)\bs{\Gamma}_{YX}(\Omega, -\ell)\bs{\Gamma}_{XX}^{-1/2}(\Omega, 0).
        \end{gather*}

        \noindent for all $\ell = 0, \pm1, \dots, \pm L$. Additionally, by definition, we have $\kappa_\Omega = \max_{\ell} \{\bs{a}^\top_{\Omega}\bs{\Gamma}_{XY}(\Omega, \ell)\bs{b}_{\Omega}\}$. Hence, in 
        $$\max\limits_{\bs u_\Omega, v_\Omega, \ell}\left [\bs{u}_\Omega^\top\bs{\Theta}(\ell)\bs{v}_\Omega\right ]^2 = \max\limits_{\bs u_\Omega, v_\Omega, \ell}\left [\bs{u}_\Omega^\top\{\bs{{\Gamma}}_{{XX}}(\Omega,0)\}^{-\frac{1}{2}}\bs{{\Gamma}}_{{XY}}(\Omega,\ell)\{\bs{{\Gamma}}_{{YY}}(\Omega,0)\}^{-\frac{1}{2}}\bs{v}_\Omega\right ]^2,$$
        it follows that $\mbf{a}_{\Omega} := \bs{\Gamma}_{XX}(\Omega, 0)^{-1/2}\mbf{u}_\Omega \ \text{and } \  \mbf{b}_{\Omega} := \bs{\Gamma}_{YY}(\Omega, 0)^{-1/2}\mbf{v}_\Omega.$
\end{proof}

\section{Minimum Covariance Determinant} \label{appn3}

Following the discussion in \citep{hubert2018minimum}, suppose we find a subset of time series denoted as $\bs{Z}^{\Omega}_0(t,\ell) = \{(X_1(t), \dots, X_P(t), Y_1(t+\ell), \dots, Y_Q(t+\ell))\}_{t \in \mathcal{H}}$ such that $\mathcal{H}$ is the set of time points of length $T/2 < |\mathcal{H}| < T$ where determinant of $\text{Cov}(\bs{Z}_0^{\Omega}(t,\ell))$ is minimized \citep[see][ for details]{hubert2018minimum}. Let $\bs{\mu}_{0}(\ell) = \sum_{t \in \mathcal{H}} \bs{Z}^{\Omega}_0(t,\ell) / |\mathcal{H}|$ and $\bs{\Sigma}_{0}(\ell) = \text{Cov}(\bs{Z}^{\Omega}_0(t,\ell))$. Let $$R_t(\ell) = \sqrt{\bs{Z}^{\Omega}_0(t,\ell) - \bs{\mu}_{0}(\ell))^\top (\bs{\Sigma}_{0}(\ell))^{-1}(\bs{Z}^{\Omega}_0(t,\ell) - \bs{\mu}_{0}(\ell))}$$ and 
$w(R^2_t(\ell)) = \mathbb{I}\{R_t^2(\ell) \leq \chi^2_{D,0.975}\}$. The MCD estimator is given as
\begin{align}
    \hat{\mu}_{j,\text{MCD}}(\ell) &= \frac{\sum_{t = 1}^T w(R^2_t(\ell)) {Z^{\Omega}_j}(t,\ell)}{\sum_{t = 1}^T w(R^2_t(\ell))}, \text{ for } j = 1, \dots, D, \notag \\
    \hat{\bs{\Sigma}}^{\text{MCD}}(\Omega, \ell) &= K_1\frac{1}{T}\sum_{  t} w(R^2_t(\ell)) (\bs{Z}^{\Omega}_0(t,\ell) - \hat{\bs{\mu}}_{\text{MCD}}(\ell))^\top (\bs{Z}^{\Omega}_0(t,\ell) - \hat{\bs{\mu}}_{\text{MCD}}(\ell)), \notag \\
    \hat{\mathbb{M}}(\Omega, \ell) &= \hat{\bs{\Sigma}}_{\text{MCD}}^{-1/2}(\Omega,0) \hat{\bs{\Sigma}}_{\text{MCD}}(\Omega,\ell) \hat{\bs{\Sigma}}_{\text{MCD}}^{-1/2}(\Omega,0),  \label{MCD}
\end{align} 
where $K_1$ is the consistency factor defined in \citep{hubert2018minimum}.
The estimator $\hat{\mathbb{M}}(\Omega, \ell)$ is what we use to serve as a robust estimator for $\bs{\Gamma}(\Omega, \ell)$.

\section{Canonical directions for the AR(2) mixture model}

Let ${ f}^{(m)}_O(\omega)$ be the spectral dentsity of $O_m(t)$, for $m = 1, \dots, 5$. 
For brain state $g$, the auto-spectra of the $j$-th EEG and the cross-spectra between the $j$-th and $k$-th EEGs are, respectively,
\begin{gather}
    {f}_{jj}^{(g)}(\omega) = \{E_{j,1}^{(g)}\}^2 \ { f}^{(1)}_O(\omega) + \dots + \{E_{j,5}^{(g)}\}^2 \ {f}^{(5)}_O(\omega) + \sigma^2_W, \label{fjj} \\
    {f}_{jk}^{(g)}(\omega) =  E_{j,1}^{(g)}E_{k,1}^{(g)} { f}^{(1)}_O(\omega) + \dots + E_{j,5}^{(g)}E_{k,5}^{(g)}{ f}^{(5)}_O(\omega). \label{fjk}
\end{gather}

Without loss of generality, we provide in the equation below the analytical solution to $\kappa(\Omega)$ when $P = Q = 2$. Let $z_{m} = {f_O^{(m)}(\omega)}$, 
$U^{(g)}_1 = ({ \{E^{(g)}_{1,m}\}^2 \{E^{(g)}_{3,m}\}^2  - \{E^{(g)}_{1,m}\}^2 \{E^{(g)}_{4,m}\}^2 })$, 
${U^{(g)}_2 = ({ \{E^{(g)}_{2,m}\}^2 \{E^{(g)}_{3,m}\}^2  - \{E^{(g)}_{2,m}\}^2 \{E^{(g)}_{4,m}\}^2 })}$, and 
${U^{(g)}_3 = ({ E^{(g)}_{1,m}E^{(g)}_{2,m}\{E^{(g)}_{3,m}\}^2 - E^{(g)}_{1,m}E^{(g)}_{2,m}\{E^{(g)}_{4,m}\}^2})}$, we obtain the analytical solution by solving for $\Lambda$ in,
\begin{equation}
 k(E^{(g)}_{1,m},E^{(g)}_{2,m}) \ k(E^{(g)}_{3,m},E^{(g)}_{4,m}) \left[ (U^{(g)}_1  - \Lambda)(U^{(g)}_2 - \Lambda) - \{U^{(g)}_3\}^2 \right] = 0,  \label{pq2}
\end{equation}
where $k(x,y) = \frac{z_{m}\sigma_W^2}{(\{x\}^2z_{m} + \sigma^2_W)(\{y\}^2z_{m} + \sigma^2_W) - ({xy}z_{m})^2}$. 

\section{Additional Tables and Figures} \label{appn2}

This section contains some additional tables and figures that support the notions presented in the main text. 

\subsection{Analysis of Driving Data}
Here, we provide additional figures for the analysis of driving-data in \cite{Cao2019DrivingData}. 
In Figure~\ref{CanCohEst-wPCA} we show the overall coherence measure from four different methods for all 157 alert trials of one subject.

The four aggregation methods applied are---(1) averaging signals with equal weights (referred to as the Naive Estimator), (2) averaging signals using the first principal component from PCA (referred to as the PCA method) (3) canonical band-coherence using KenCoh and (4) canonical band-coherence using VCov. Figure~\ref{CanCohEst-wPCA} shows the measure of global coherence using these different methods.
\begin{figure}[h!]
    \centering
    \includegraphics[width=0.99\textwidth]{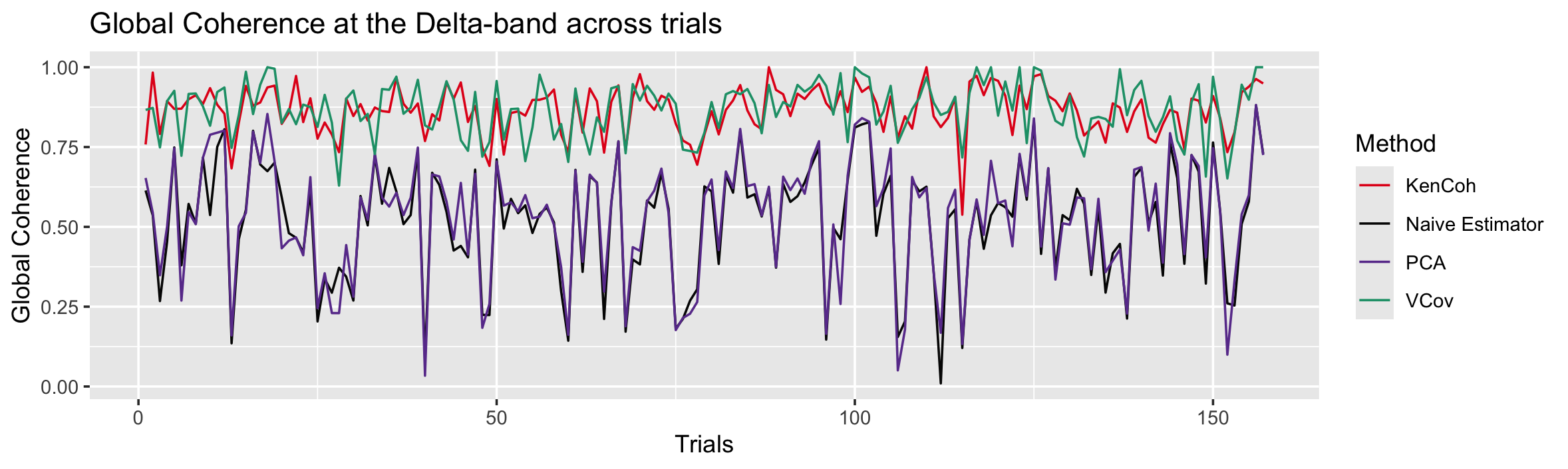}
    \caption{Comparison of global estimate for coherence of LFp (Left Frontal-Anterior region) and LTPO (Left Temporal-Parietal-Occipital region) using KenCoh, Naive Estimator, PCA, and VCov across the 157 trials of alert-state}
    \label{CanCohEst-wPCA}
\end{figure}
\begin{table}[h!]
\caption{Adjusted $p$-values using three estimation methods for all frequency-bands. Significant $p$-values are marked with $^{*}$ if it is below 10\%, $^{**}$ if it is below 5\%, and $^{***}$ if it is below 1\%.}
\label{tab:AdjPVals-notDir-allbands}
\resizebox{\textwidth}{!}{%
\begin{tabular}{@{}cccccccc@{}}
\toprule
\multicolumn{1}{l}{\textbf{Connectivity}} & \textbf{Connectivity} & \multicolumn{1}{c}{\textbf{LFp - LTPO}} & \multicolumn{1}{c}{\textbf{LFp - RTPO}} & \multicolumn{1}{c}{\textbf{LFp - RFp}} & \multicolumn{1}{c}{\textbf{LTPO - RTPO}} & \multicolumn{1}{c}{\textbf{LTPO - RFp}} & \multicolumn{1}{c}{\textbf{RFp - RTPO}} \\ \midrule
\multirow{5}{*}{\textbf{KenCoh}}          & Delta                 & \multicolumn{1}{c}{0.0129$^{**}$}            & \multicolumn{1}{c}{0.0024$^{**}$}            & \multicolumn{1}{c}{0.0158$^{**}$}           & \multicolumn{1}{c}{0.7008}               & \multicolumn{1}{c}{0.6713}              & \multicolumn{1}{c}{0.4553}              \\ 
                                          & Theta                 & \multicolumn{1}{c}{0.6092}              & \multicolumn{1}{c}{0.5872}              & \multicolumn{1}{c}{0.6226}             & \multicolumn{1}{c}{0.6152}               & \multicolumn{1}{c}{0.5996}              & \multicolumn{1}{c}{0.6074}              \\
                                          & Alpha                 & \multicolumn{1}{c}{0.2966}              & \multicolumn{1}{c}{0.3098}              & \multicolumn{1}{c}{0.303}              & \multicolumn{1}{c}{0.3082}               & \multicolumn{1}{c}{0.3074}              & \multicolumn{1}{c}{0.2978}              \\
                                          & Beta                  & \multicolumn{1}{c}{0.035$^{**}$}              & \multicolumn{1}{c}{0.036$^{**}$}              & \multicolumn{1}{c}{0.0328$^{**}$}            & \multicolumn{1}{c}{0.036$^{**}$}               & \multicolumn{1}{c}{0.0306$^{**}$}             & \multicolumn{1}{c}{0.0364$^{**}$}             \\
                                          & Gamma                 & \multicolumn{1}{c}{0.0006$^{**}$}             & \multicolumn{1}{c}{0.0012$^{**}$}             & \multicolumn{1}{c}{0.0006$^{**}$}            & \multicolumn{1}{c}{0.0004$^{**}$}              & \multicolumn{1}{c}{0.0008$^{**}$}             & \multicolumn{1}{c}{0.0014$^{**}$}             \\ \midrule
\multirow{5}{*}{\textbf{VCov}}            & Delta                 & 0.6132                                  & 0.027$^{**}$                                 & 0.027$^{**}$                                & 0.428                                    & 0.6132                                  & 0.4863                                  \\
                                          & Theta                 & 0.2104                                  & 0.2122                                  & 0.2252                                 & 0.2194                                   & 0.217                                   & 0.2142                                  \\
                                          & Alpha                 & 0.0054$^{**}$                                 & 0.0062$^{**}$                                 & 0.0044$^{**}$                                & 0.0042$^{**}$                                  & 0.0046$^{**}$                                 & 0.0034$^{**}$                                 \\
                                          & Beta                  & 0.0286$^{**}$                                 & 0.032$^{**}$                                  & 0.0318$^{**}$                                & 0.0312$^{**}$                                  & 0.0296$^{**}$                                 & 0.028$^{**}$                                  \\
                                          & Gamma                 & 0.0228$^{**}$                                 & 0.0274$^{**}$                                 & 0.0238$^{**}$                                & 0.0242$^{**}$                                  & 0.0264$^{**}$                                 & 0.0262$^{**}$                                 \\ \midrule
\multirow{5}{*}{\textbf{MCD}}             & Delta                 & 0.2184                                  & 0.8136                                  & 0.9200                                   & 0.92                                     & 0.9200                                    & 0.2184                                  \\
                                          & Theta                 & 0.3958                                  & 0.4134                                  & 0.4064                                 & 0.4006                                   & 0.4112                                  & 0.3938                                  \\
                                          & Alpha                 & 0.9596                                  & 0.9604                                  & 0.9512                                 & 0.9590                                    & 0.9522                                  & 0.9604                                  \\
                                          & Beta                  & 0.0952$^{*}$                                  & 0.0988$^{*}$                                  & 0.1014                                 & 0.1066                                   & 0.1056                                  & 0.0956$^{*}$                                  \\
                                          & Gamma                 & \multicolumn{1}{c}{0.0228$^{**}$}             & \multicolumn{1}{c}{0.0274$^{**}$}             & \multicolumn{1}{c}{0.0238$^{**}$}            & \multicolumn{1}{c}{0.0242$^{**}$}              & \multicolumn{1}{c}{0.0264$^{**}$}             & \multicolumn{1}{c}{0.0262$^{**}$}             \\ \bottomrule 
\end{tabular}%
}
\end{table}

\begin{table}[]
\caption{The $p$-values at the Delta-band for all subject. Significant $p$-values are marked with $^{*}$ if it is below 10\%, $^{**}$ if it is below 5\%, and $^{***}$ if it is below 1\%.}
\label{tab:AllSubj-Delta}
\resizebox{\textwidth}{!}{%
\begin{tabular}{@{}lcccccc@{}}
\toprule
Connectivity & LFp$\leftrightarrow$LTPO & LFp$\leftrightarrow$RTPO & LFp$\leftrightarrow$RFp & LTPO$\leftrightarrow$RTPO & LTPO$\leftrightarrow$RFp & RFp$\leftrightarrow$RTPO \\ \midrule
Subj 1       & 0.5284     & 0.0227$^{**}$     & $<0.0001^{***}$         & 0.0045$^{**}$      & 0.1086     & 0.0160$^{**}$      \\
Subj 2       & $<0.0001^{***}$          & 0.0153$^{**}$     & 0.2742    & $<0.0001^{***}$           & $<0.0001^{***}$          & 0.3678     \\
Subj 3       & 0.3210      & 0.3110      & 0.9948    & 0.1760       & 0.8648     & 0.115      \\ 
Subj 4       & 0.8078     & 0.0043$^{**}$     & 0.0283$^{**}$    & 0.0947$^{*}$      & $<0.0001^{***}$          & 0.0640$^{*}$      \\
Subj 5       & 0.0752$^{*}$     & 0.2980      & 0.0005$^{**}$     & 0.0616$^{**}$      & $<0.0001^{***}$          & 0.161      \\
Subj 6       & $<0.0001^{***}$          & $<0.0001^{***}$          & 0.0003$^{**}$     & 0.0063$^{**}$      & $<0.0001^{***}$          & $<0.0001^{***}$          \\
Subj 7       & 0.1680      & 0.0235$^{**}$     & 0.042$^{**}$     & 0.9404      & 0.6490      & 0.6700       \\
Subj 8       & 0.5678     & 0.0397$^{**}$     & 0.1077    & 0.4236      & 0.7600       & 0.2034     \\
Subj 9       & 0.0215$^{**}$     & 0.022$^{**}$      & 0.0254$^{**}$    & 0.3960       & 0.2770      & 0.2703     \\
Subj 10      & 0.5365     & 0.4470      & 0.2390     & 0.6110       & 0.3820      & 0.7908     \\ 
Subj 11      & 0.5094     & 0.7254     & 0.4542    & 0.5430       & 0.4800       & 0.1968     \\ \bottomrule
\end{tabular}%
}
\end{table}

\subsection{Analysis of Calcium Recording Data} 
Here, we provide additional figures for the analysis of rat calcium imaging in \cite{reyesvallejo2023neurogliaform}. Table~\ref{tab:ClassifNeurons} shows the classification of 138 neurons from one site in mice brain. 
\begin{table}[h!]
    \centering
    \caption{Classification of neurons observed from a certain site of mice brain}
    \label{tab:ClassifNeurons}
    \begin{tabular}{@{}lrrrr@{}}
    \toprule
    Neuron classification & Adapting (AN) & Increased (IN) & Undefined-response (UN) & Total \\ \midrule
    Neurons in Layer 1               & 17 & 17 & 31 & 65    \\
    Neurons in Layer 2/3             & 21 & 12 & 40 & 73    \\
    Total Neurons                & 38 & 29 & 71 & 138   \\ \bottomrule
    \end{tabular}
    \end{table}

    
\end{appendix}

\newpage
\bibliographystyle{apalike} 
\bibliography{bibliography}       

\end{document}